\tikzset{>=Latex}
\savestack\zigzagtextstyle{\vphantom{()}\scalebox{.86666}[1]{\kern-.5pt\AC\kern-.5pt}}
\savestack\onelinetextstyle{\vphantom{()}\scalebox{1}[1]{\kern-1pt{$-$}\kern-1pt}}
\savestack\twolinetextstyle{\vphantom{()}\scalebox{1}[1]{\kern-1pt{$=$}\kern-1pt}}
\savestack\ZigZagtextstyle{
	$\vphantom{()}\smash{\raisebox{-.002em}{$\vcenter{\hbox{%
					\stackengine{-.805em}{\zigzagtextstyle}{\zigzagtextstyle}{O}{c}{F}{F}{S}}}$}}$}
\newlength\repwidth
\newcommand\rightarrowhead{\clipbox{4pt -2pt 0pt -2pt}{$\rightarrow$}}
\newcommand\Rightarrowhead{\clipbox{4pt -2pt 0pt -2pt}{$\Rightarrow$}}
\newcommand\rrightarrowhead{\clipbox{5.5pt -2pt 0pt -2pt}{$\rightarrow\kern-8.5pt\rightarrow$}}
\newcommand\RRightarrowhead{\clipbox{5.5pt -2pt 0pt -2pt}{$\Rightarrow\kern-8.5pt\Rightarrow$}}
\newcommand\fXarrowtextstyle[2]{$\vphantom{()}\smash{\vcenter{\hbox{\kern-.03\repwidth%
				\clipbox{-.03\repwidth{} 0pt .527\repwidth{} 0pt}{#1}}}}#2$}
\savestack\fzigarrowtextstyle{\fXarrowtextstyle{\zigzagtextstyle}{\rightarrowhead}}
\savestack\fZigarrowtextstyle{\fXarrowtextstyle{\ZigZagtextstyle}{\Rightarrowhead}}
\savestack\flinearrowtextstyle{\fXarrowtextstyle{\onelinetextstyle}{\rightarrowhead}}
\savestack\fLinearrowtextstyle{\fXarrowtextstyle{\twolinetextstyle}{\Rightarrowhead}}
\savestack\fzzigarrowtextstyle{\fXarrowtextstyle{\zigzagtextstyle}{\rrightarrowhead}}
\savestack\fZZigarrowtextstyle{\fXarrowtextstyle{\ZigZagtextstyle}{\RRightarrowhead}}
\savestack\fllinearrowtextstyle{\fXarrowtextstyle{\onelinetextstyle}{\rrightarrowhead}}
\savestack\fLLinearrowtextstyle{\fXarrowtextstyle{\twolinetextstyle}{\RRightarrowhead}}
\newcommand\zigzag{\scalerel*{\zigzagtextstyle}{()}}
\newcommand\oneline{\scalerel*{\onelinetextstyle}{()}}
\newcommand\fzigarrow{\scalerel*{\fzigarrowtextstyle}{()}}
\newcommand\flinearrow{\scalerel*{\flinearrowtextstyle}{()}}
\newcommand\fllinearrow{\scalerel*{\fllinearrowtextstyle}{()}}
\newcommand\zigzagarrow[1][]{\Xarrow{\zigzag}{\fzigarrow}{#1}{-.65}}
\newcommand\linearrow[1][]{\Xarrow{\oneline}{\flinearrow}{#1}{-.65}}
\newcommand\llinearrow[1][]{\Xarrow{\oneline}{\fllinearrow}{#1}{-.65}}
\newcommand\Xarrow[4]{\ThisStyle{\mathrel{\Xarrowhelp{#1#2}{#3}{#1}{#4}}}}
\newcommand\Xarrowhelp[4]{%
	\setbox0=\hbox{$\SavedStyle#1$}%
	\setbox2=\hbox{$\SavedStyle_{\,\,#2\,\,}$}%
	\ifdim\wd0<\wd2\relax\Xarrowhelp{#3#1}{#2}{#3}{#4}%
	\else\stackengine{#4\LMex}{\copy0}{\copy2\,}{O}{c}{F}{T}{S}\fi%
}
\newcommand{\efs}[1][]{\linearrow[#1]}
\newcommand{\ebs}[1][]{\zigzagarrow[#1]}
\newcommand{\efbs}[1][]{\llinearrow[#1]}
\newcommand{\redl}[1]{\efs[#1]} %
\newcommand{\revredl}[1]{\ebs[#1]} %
\newcommand{\fwlts}[2]{\efs[#1:#2]}
\newcommand{\bwlts}[2]{\ebs[#1:#2]}
\newcommand{\fbwlts}[2]{\efbs[#1:#2]}
\newcommand{\tfbwlts}{\efbs[\star]}
\newcommand{\congru}{\equiv} %
\DeclareMathOperator{\co}{co}
\newcommand{\HPB}{\textnormal{HPB}\xspace}
\newcommand{\HHPB}{\textnormal{HHPB}\xspace}
\newcommand{\wfHPB}{\textit{wf}\HPB}
\newcommand{\wfHHPB}{\textit{wf}\HHPB}
\newcommand{\emptymem}{\emptyset}
\newcommand{\orig}[1]{O_{#1}} %
\newcommand{\fork}{\curlyvee} %
\newcommand{\bs}{\backslash}
\newcommand{\coh}{\frown}
\newcommand{\cat}[1]{\mathbb{#1}}
\newcommand{\catcs}{\cat{C}}
\newcommand{\catics}{\cat{D}}
\DeclareMathOperator{\ids}{\textrm{I}}
\DeclareMathOperator{\card}{Card}
\DeclareMathOperator{\id}{id}
\newcommand{\iden}{\mathsfit{m}}
\newcommand{\iconf}{\ensuremath{\mathcal{D}}}
\DeclarePairedDelimiter{\encc}{\llbracket}{\rrbracket}
\newcommand{\enc}{\encc} %
\newcommand{\encr}{\encc} %
\newcommand{\names}{\ensuremath{\mathsf{N}}}
\newcommand{\proc}{\ensuremath{\mathsf{P}}}
\newcommand{\rproc}{\ensuremath{\mathsf{R}}}
\newcommand{\labels}{\ensuremath{\mathsf{L}}}
\newcommand{\polabel}{\ensuremath{\mathcal{L}}}
\newcommand{\iso}{\ensuremath{\cong}}
\newcommand{\out}[1]{\overline{#1}}
\newcommand{\power}{\ensuremath{\mathcal{P}}}
\newcommand{\conf}{\ensuremath{\mathcal{C}}}
\newcommand{\rel}{\ensuremath{\mathcal{R}}}
\DeclarePairedDelimiter{\mem}{\langle}{\rangle}
\newcommand{\labl}{\ell}
\newcommand{\relabl}{r}
\newcommand{\restr}[1]{\mathord{\upharpoonright_{#1}}}
\newcommand{\confzero}{\mathbf{0}}
\newcommand{\iconfzero}{\mathbf{0}}
\newcommand{\nat}{\mathbb{N}}
\newcommand{\integer}{\nat}
\newcommand{\setst}{ \mid } %
\newcommand{\st}{s.t.\ } %
\newcommand*{\resp}{resp.\@\xspace}
\newcommand{\BNFsepa}{\enspace \Arrowvert \enspace}
\newcommand{\functoric}{\ensuremath{\mathcal{F}}} %
\newcommand{\functorci}{\ensuremath{\mathcal{S}}} %
\DeclarePairedDelimiter{\encm}{\lfloor}{\rfloor}
\newcommand{\namelist}[1]{\overrightarrow{#1}}
\newcommand{\xmax}{\ensuremath{x_{\textsc{max}}}}
\newcommand{\te}[1]{\ensuremath{\textsc{te}_{#1}}}
\newcommand{\mt}[1]{\ensuremath{\textsc{mt}_{#1}}}
\newcommand{\me}[1]{\ensuremath{\textsc{me}_{#1}}}
\newcommand\hypo{\Hypo}
\newcommand\infer{\Infer}
\title{History-Preserving Bisimulations on Reversible Calculus of Communicating Systems%
}%
\titlerunning{History-Preserving Bisimulations on RCCS}
\author{%
 Clément Aubert\inst{1}
 \and
 Ioana Cristescu\inst{2}}
\authorrunning{C. Aubert \& I. Cristescu}
\institute{%
 Augusta University, Augusta, GA 30912, USA%
 \and%
 Inria Rennes, France%
}
\begin{document}

\maketitle

\begin{abstract}
	\emph{History-} and \emph{hereditary history-preserving bisimulation (\HPB and \HHPB)} are equivalences relations for denotational models of concurrency.
	Finding their counterpart in process algebras is an open problem, with some partial successes:
	there exists in calculus of communicating systems (CCS) an equivalence based on causal trees that corresponds to \HPB.
	In Reversible CSS (RCCS), there is a bisimulation that corresponds to \HHPB, but it considers only processes without \emph{auto-concurrency}.
	We propose equivalences on CCS with auto-concurrency that correspond to \HPB and \HHPB, and their so-called \enquote{weak} variants.
	The equivalences exploit not only reversibility but also the memory mechanism of RCCS.
	\keywords{Formal semantics %
		\and %
		Process algebras and calculi %
		\and %
		Reversible CCS %
		\and %
		Hereditary history-preserving bisimulation%
		.
	}
\end{abstract}

\section{Introduction}

\paragraph{Reversing Concurrent Computation}

Implementing reversibility in a programming language often requires to record the history of the execution.
Ideally, this history should be \emph{complete}, so that every forward step can be unrolled, and \emph{minimal}, so that only the relevant information is saved.
Concurrent programming languages have a third requirement: the history should be \emph{distributed}, to avoid centralization of information. %
To fulfill those requirements, Reversible CCS~\cite{Danos2004,Danos2005} uses \emph{memories} attached to the threads of a process. %

\paragraph{Equivalences for Reversible Processes}
A theory of reversible concurrent computation relies not only on a syntax, but also on \enquote{meaningful} behavioral equivalences.
In this paper we study behavioral equivalences defined on configuration structures~\cite{Winskel1986}, which are denotational models for concurrency.
In configuration structures, an \emph{event} represents an execution step, and a \emph{configuration}---a set of events that occurred---represents a state.
A forward transition is then represented as moving from a configuration to one of its superset, and backward transitions have a \enquote{built-in} representation: it suffices to move from a configuration to one of its subset.
Many behavioral equivalences have been defined for configuration structures ; some of them, like \emph{history-} and \emph{hereditary history-preserving bisimulations} (\HPB and \HHPB), use that \enquote{built-in} notion of reversibility.

\paragraph{Encoding Reversible Processes in Configuration Structures}
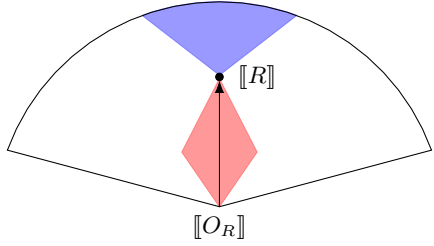
\begin{figure}
	{\centering
		\begin{tikzpicture}
			\node (P) at (0, 0){\(\enc{\orig{R}}\)};
			\node[inner sep = 0] (R) at (0,2) {};
			\node (Rt) at ($(R) + (0.5,0)$) {\(\enc{R}\)};
			\draw [fill=blue!40, draw=blue!50, domain=70:110] plot ({cos(\x) * 3}, {sin(\x)*3}) -- (R.north) -- cycle;
			\draw [fill= red!40, draw= red!50] (P.north) -- (-0.5,1) -- (R.south) -- (0.5,1) -- cycle;
			\draw [-Latex] (P.north) to ($(R.south) + (0, -0.02)$);
			\draw [domain=20:160] plot ({cos(\x) * 3}, {sin(\x)*3}) -- (P.north) -- cycle;
			\node[draw, shape=circle, inner sep = 0, minimum size=.1cm, fill=black] (R) at (0,2) {};
		\end{tikzpicture}

	}
	\caption{The encoding of \(\orig{R}\), of the future and of the past of a reversible \(R\).}
	\label{fig:schema}
\end{figure}
An ongoing research effort~\cite{Aubert2016jlamp,Phillips2011} is to transfer equivalences defined in denotational models, which are by construction adapted for reversibility, back into the reversible process algebra.
Of course, showing that an equivalence on configuration structures corresponds to one on RCCS processes depends on the encoding of RCCS terms into configuration structures.
One of them uses the fact that we are typically interested only in \emph{reachable} reversible processes---processes \(R\) which can backtrack to a process \(\orig{R}\) with an empty memory--- called its \emph{origin}.
Then, a natural choice is to consider \(\encr{\orig{R}}\)---the encoding of the origin of \(R\), using the common mapping for CCS processes~\cite{Winskel1986}---, and to identify in it the configuration corresponding to the current state of the reversible process.
In this set-up, the encoding of \(R\) is \emph{one configuration}, \(\enc{R}\), in the configuration structure \(\encr{\orig{R}}\): every configuration \enquote{below} is the \enquote{past} of \(R\), every configuration \enquote{above}, its \enquote{future} (\autoref{fig:schema}).

\paragraph{Contribution}

This paper improves %
on previous results by defining relations on CCS processes that correspond to \HPB, \HHPB, and their \enquote{weak function} variants.
The result does not require to consider a restricted class of processes.
We introduce an encoding of memories independent of the rest of the process and show that, as expected, the \enquote{past} of a process corresponds to the encoding of its memory.
The memories attached to a process are no longer only a syntactic layer to implement reversibility, but become essential for defining equivalences. This result gives an insight on the expressiveness of reversibility, as the back-and-forth moves of a process are not enough to capture \HHPB.

\paragraph{Related work}
The correspondence between \HHPB and back-and-forth bisimulations for processes without auto-concurrency~\cite{Aubert2016jlamp,Phillips2011} motivated some of the work presented here.
Our approach shares similarity with causal trees---in the sense that we encode only \emph{part of the execution} in a denotational representation---where some bisimulations corresponds to \HPB~\cite{Darondeau1990a}.

\paragraph{Outline}
We start by recalling the definitions of configuration structures (\autoref{sec:cs}), of the encoding of CCS in configuration structures (\autoref{sec:css-enc}), of (hereditary) history-preserving bisimulations (\autoref{sec:hpbs}), of RCCS (\autoref{sec:RCCS}) and of related notions. We also recall previous result on \HHPB (\autoref{thm:previousresult}).
We consider the reader familiar with CCS, in particular with its congruence relations and reduction rules.%

\autoref{sec:lifting} starts by defining a structure slightly richer than configuration structures, that we call \enquote{identified configuration structures} (\autoref{sec:ics}), and defines basic operations on them.
\autoref{sec:enc_mem} defines and illustrates with numerous examples how identified configuration structures can encode memories. %
Finally, \autoref{sec:equiv_rccs} uses this encoding to define relations on RCCS and CSS processes that are then stated to correspond to \HPB and \HHPB on configuration structures.

\autoref{sec:conclusion} concludes, and \autoref{sec:appendix} gathers the proofs and establishes the robustness of the tools introduced.

\section{Preliminary Definitions}
\label{sec:prelim}
We recall the definitions of configuration structures, auto-concurrency (\autoref{sec:cs}), how to encode CCS processes into configuration structures (\autoref{sec:css-enc}) and the history-preserving bisimulations (\autoref{sec:hpbs}).

We write \(\subseteq\) the set inclusion, \(\mathcal{P}\) the power set, \(\bs\) the set difference, \(\card\) the cardinal, \(\circ\) the composition of functions, \(A \to B\) the set of functions between \(A\) and \(B\), \(A \rightharpoonup B\) the partial functions and \(f \restr{C}\) the restriction of \(f : A \to B\) to \(C \subseteq A\).

Let \(\names=\{a,b,c,\dots\}\) be a set of \emph{names} and \(\out{\names}=\{\out{a},\out{b},\out{c},\dots\}\) its \emph{co-names}.
The complement of a (co-)name is given by a bijection \(\out{[\cdot]}:\names \to \out{\names}\), whose inverse is also denoted by \(\out{[\cdot]}\). %
We write \(\namelist{a}\) for a list of names \(a_1,\cdots,a_n\).
We define the sets of labels \(\labels = \names \cup \out{\names} \cup \{\tau\}\), let\(\polabel = \mathcal{P}(\labels)\), and use \(\alpha\) (resp.\ \(\lambda, \nu\)) to range over \(\labels\) (resp.\ \(\labels \bs \{\tau\}\)).

\subsection{Configuration Structures}
\label{sec:cs}

\begin{definition}[Configuration structures]
	\label{def:conf_str}
	A \emph{configuration structure} \(\conf\) is a tuple \((E, C, L, \labl)\) where \(E\) is a set of events, \(L\subseteq\polabel\) is a set of labels, \(\labl : E \to L \) is a labeling function and \(C \subseteq \power(E)\) is a set of subsets satisfying:
	\begin{align*}
		 & \forall x \in C, \forall e \in x, \exists z \in C\text{ finite}, e\in z, z \subseteq x \tag{Finiteness}                                   \\
		 & \forall x \in C, \forall e, e' \in x, e\neq e'\Rightarrow
		\exists z \in C, z \subseteq x,%
		e\in z \iff e'\notin z%
		\tag{Coincidence Freeness}                                                                                                                   \\
		 & \forall X \subseteq C, \exists y\in C \text{ finite }, \forall x \in X, x\subseteq y \Rightarrow \bigcup X \in C \tag{Finite Completness} \\
		 & \forall x,y\in C, x\cup y\in C \Rightarrow x\cap y\in C \tag{Stability}
	\end{align*}
	We denote \(\confzero\) the configuration structure with \(E = \emptyset\), and write \(x \redl{e} y\) and \(y \revredl{e}x\) for \(x, y \in C\) such that \(x=y\cup\{e\}\).
\end{definition}

For the rest of this paper, we often omit \(L\), %
let \(x,y,z\) range over configurations%
, and assume that we are always given \(\conf = (E, C, \labl)\) and \(\conf_i = (E_i, C_i, \labl_i)\), for \(i = 1, 2\).

\begin{definition}[Causality, Concurrency, and Maximality]
	\label{def:causality}
	For %
	\(x \in C\) and \(d, e \in x\), the \emph{causality relation on \(x\)} is given by \(d <_{x} e\) iff \(d \leqslant_x e\) and \(d \neq e\), where \(d \leqslant_x e\) iff for all \(y \in C\) with \(y \subseteq x\), we have \(e \in y \Rightarrow d \in y\).
	The \emph{concurrency relation on \(x\)}~\cite[Definition 5.6]{Glabbeek2001} %
	is given by \(d \co_x e\) iff \(\neg (d <_x e \vee e <_x d) \).
	Finally, \(x\) is \emph{maximal} if \(\forall y\in\conf\), \(x=y\) or \(x\not\subseteq y\).
\end{definition}

\begin{figure}
	{\centering
		\begin{minipage}[b]{.26\linewidth}
			\begin{tikzpicture}[scale=0.75]
				\node (emptyset) at (0,-1.5) {\(\emptyset\)};
				\node (a1) at (-1.5,0) {\(\{a_1\}\)};
				\node (a2) at (1.5,0) {\(\{a_2\}\)};
				\draw [->] (emptyset) -- (a1);
				\draw [->] (emptyset) -- (a2);
			\end{tikzpicture}
			\subcaption{\(\enc{a+a}\)}\label{fig:ex:3}
		\end{minipage}
		\hfill
		\begin{minipage}[b]{.26\linewidth}
			\begin{tikzpicture}[scale=0.75]
				\node (emptyset) at (0,-1.5) {\(\emptyset\)};
				\node (a1) at (-1.5,0) {\(\{a_1\}\)};
				\node (a2) at (1.5,0) {\(\{a_2\}\)};
				\node (a1a2) at (0, 1.5) {\(\{a_1, a_2\}\)};
				\draw [->] (emptyset) -- (a1);
				\draw [->] (emptyset) -- (a2);
				\draw [->] (a1) -- (a1a2);
				\draw [->] (a2) -- (a1a2);
			\end{tikzpicture}
			\subcaption{\(\enc{a \mid a}\)}\label{fig:ex:2}
		\end{minipage}
		\hfill
		\begin{minipage}[b]{.45\linewidth}
			\begin{tikzpicture}[scale=0.75]
				\node (emptyset) at (1,-1.5) {\(\emptyset\)};
				\node (b) at (-1,0) {\(\{a\}\)};
				\node (c) at (1,0) {\(\{\out{a}\}\)};
				\node (bcp) at (0, 1.5) {\(\{a, \out{a}\}\)};
				\node (ba) at (-2,1.5) {\(\{a, b\}\)};
				\node[align=center] (bapc) at (0,3) {\(\{a, \out{a}, b\}\)};
				\draw [->] (emptyset) -- (b);
				\draw [->] (emptyset) -- (c);
				\draw [->] (b) -- (bcp);
				\draw [->] (c) -- (bcp);
				\draw [->] (b) -- (ba);
				\draw [->] (ba) -- (bapc);
				\draw [->] (bcp) -- (bapc);
				\node (t) at (3, 0) {\(\{\tau\}\)};
				\node (tb) at (3, 1.5) {\(\{\tau, b\}\)};
				\draw [->] (emptyset) to (t);
				\draw [->] (t) to (tb);
			\end{tikzpicture}
			\subcaption{\(\enc{\out{a} \mid a.b}\)}\label{fig:ex:4}
		\end{minipage}
		\caption{Examples of configuration structures}
		\label{fig:ex}
	}
\end{figure}
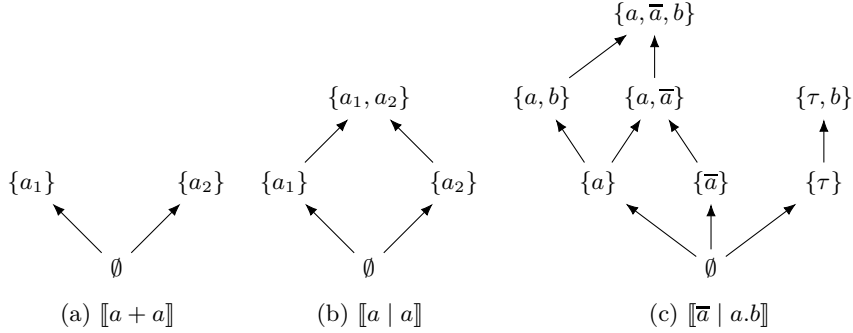

\begin{example}
	Consider the configuration structures of \autoref{fig:ex}, where the set of events and of configurations can be read from the diagram, and where we make the abuse of notation of writing the events as their labels (with a subscript if multiple events have the same label). %
	Note that two events with complement names can happen at the same time (\autoref{fig:ex:4}), in which case they are labeled with \(\tau\) and called \emph{silent transition}, as it is usual in CCS (\autoref{sec:css-enc}).
\end{example}

\begin{definition}[Category of configuration structures]
	\label{def:cat_lcs}
	We define \(\catcs\) the category of configuration structure, where objects are configuration structures, and a morphism \(f\) from \(\conf_1\) %
	to \(\conf_2\) %
	is a triple \((f_E, f_L, f_C)\) such that
	\begin{itemize}
		\item \(f_E : E_1 \to E_2\) preserves labels:
		      \(\labl_2(f_E(e)) = f_L(\labl_1(e))\), for \(f_L : L_1\to L_2\);
		\item \(f_C : C_1 \to C_2\) is defined as
		      \(f_C(x) = \{ f_E(e) : e\in x\}\).
	\end{itemize}
	If there exists an isomorphism \(f : \conf_1 \to \conf_2\), then we write \(\conf_1 \iso \conf_2\).
\end{definition}
We omit the \(f_L\) part of the morphisms when it is the identity morphism.

We now recall how process algebra constructors are defined on configuration structures~\cite{Winskel1982}.
The definition below may seem technical, but \autoref{def:encoding-CCS-cf} should make it clear that they capture %
the right notion.

This definition uses the product \((\times_{\star}, p_1, p_2)\) of the category of sets and partial functions~\cite[Appendix A]{Winskel1982}%
: letting \(\star\) denote \emph{undefined} for a partial function, \(C^{\star}= C \cup \{\star\}\) for a set \(C\), we define, for two sets \(A\) and \(B\),
\[
	A \times_{\star} B =
	\{(a,\star)\mid a \in A\}\cup\{(\star, b)\mid b \in B\}
	\cup\{(a, b)\mid a \in A, b\in B\}
\]
with \(p_1: A \times_{\star} B \to A^{\star}\) and \(p_2 : A \times_{\star} B \to B^{\star}\).

\begin{definition}[Operations on configuration structures~\cite{Aubert2016jlamp,Sassone1996}]
	\label{cat-op-def}
	\begin{description}[labelsep=.3333em]
		\item[The product]
		      of \(\conf_1\) and \(\conf_2\) is \(\conf_1\times\conf_2 = (E_1 \times_{\star} E_2, C, \labl)\).
		      Define the projections \(\pi_i : \conf \to \conf_i\) %
		      and the configurations \(x\in C\) such that:%
		      \begin{align*}
			       & \forall e \in E, \pi_i(e)=p_i(e),                                                                     %
			      \pi_i(\labl_i(e)) = \labl_i(\pi_i(e))                                                                    \\
			       & \pi_i(x)\in C_i                                                                                       \\
			       & \forall e,e'\in x, \pi_1(e)=\pi_1(e')\neq\star\text{ or }\pi_2(e)=\pi_2(e')\neq\star \Rightarrow e=e' \\
			       & \forall e \in x, \exists z\subseteq x \text{ finite},\pi_1(x) \in C_1, \pi_2(x)\in C_2,               %
			      e\in z                                                                                                   \\
			       & \forall e, e' \in x, e\neq e'\Rightarrow \exists z\subseteq x,                                        %
			      \pi_i(z) \in C_i,%
			      e\in z \iff e'\notin z%
		      \end{align*}

		      The labeling function \(\labl:E_1\times_{\star} E_2 \to L_1\cup L_2 \cup (L_1\times L_2)\) is %
		      \[
			      \labl(e) = \begin{dcases*}
				      \labl_1(e_1) & if \(\pi_1(e)=e_1 \neq\star\) and \(\pi_2(e)=\star\)\\
				      \labl_2(e_2) & if \(\pi_1(e)=\star\) and \(\pi_2(e)=e_2 \neq\star\)\\
				      (\labl_1(e_1),\labl_2(e_2)) & otherwise
			      \end{dcases*}
		      \]
		\item[The relabeling] of \(\conf_1\) along \(\relabl: E_1 \to L\) %
		      is \(\relabl \circ \conf_1 = (E_1, C_1, \relabl)\).

		\item[The restriction] of \(\conf_1\) to \(E \subseteq E_1\) is \(\conf_1%
		      \restr{E} = (E, C, \labl \restr{E})\), where \(x \in C \iff x \in C_1\) and \(x \subseteq E\). %
		      The restriction of \(\conf_1\) to a name \(a\) is \(\conf_1%
		      \restr{a} \coloneqq \conf_1%
		      \restr{E_1^a}\) where \(E_1^a=\{e \in E_1 \setst \labl(e) \notin \{a, \out{a}\}\}\).
		      For \(\namelist{a} = a_1, \hdots, a_n\) a list of names, we define similarly \(\conf_1 \restr{\namelist{a}} = \conf_1 \restr{E_1^{\namelist{a}}}\) for \(E_1^{\namelist{a}} = \{e \in E_1 \setst \labl(e) \notin \{a_1, \out{a_1}, \hdots, a_n, \out{a_n}\}\}\).

		\item[The parallel composition] of \(\conf_1\) and \(\conf_2\) is \(\conf_1 \mid \conf_2 = \big(\relabl \circ (\conf_1\times\conf_2)\big)\restr{F}\),
		      with %
		      \begin{itemize}
			      \item \(\conf_1\times\conf_2 = \conf_3 = (E_3,C_3,\labl_3)\) is the product;
			      \item \(\relabl \circ \conf_3\) with \(\relabl: E_3 \to L_1\cup L_2\cup\{\bot\} \) defined as follows:
			            \[
				            \relabl(e) = \begin{dcases*}
					            \labl_3(e) & if \(\labl_3(e)\in\{a,\out{a}\}\)\\
					            \tau & if \(\labl_3(e)\in\{(a,\out{a}), (\out{a},a), \tau\}\)\\
					            \bot & otherwise
				            \end{dcases*}
			            \]
			      \item \((\relabl \circ \conf_3)\restr{F}\)%
			            , where \(F=\{e\in E_3 \setst \relabl(e)\neq \bot\}\).
		      \end{itemize}

		\item[The coproduct] of \(\conf_1\) and \(\conf_2\) is \(\conf_1+\conf_2 = \conf\), where \(E=(\{1\}\times E_1)\cup(\{2\}\times E_2)\) and \(C=\{\{1\}\times x \setst x\in C_1\}\cup\{\{2\}\times x \setst x\in C_2\}\).
		      The labeling function \(\labl\) is defined as \(\labl(e)=\labl_i(\pi_2(e))\) when \(\pi_1(e) = i\).%

		\item[The prefixing] of \(\conf_1\) by the name \(\lambda\) is \(\lambda . \conf_1 = (e \cup E_1, C, \labl)\), for \(e \notin E_1\) where \(x \in C \iff x = \emptyset \vee \exists x' \in C_1\), \(x = x' \cup e\) ; \(\labl(e) = \lambda\) and \(\forall e' \neq e\), \(\labl(e') = \labl_1(e')\).
	\end{description}
\end{definition}

\begin{definition}[Auto-concurrency~{\cite[Definition 9.5]{Glabbeek2001}}]
	\label{def:autoc}
	If \(\forall x \in C\), \(\forall e_1, e_2 \in x\), \(e_1 \co_x e_2\) and \(\labl(e_1) = \labl(e_2)\) implies \(e_1 = e_2\), then \(\conf\) is \emph{without auto-concurrency}.
\end{definition}

Any configuration structure where configurations have at most one event (like \autoref{fig:ex:3}) are without auto-concurrency.
\autoref{fig:ex:2}%
, on the other hand, is a configuration structure with auto-concurrency:
for \(x = \{a_1, a_2\}\) we have that \(a_1 \co_{x} a_2\),
\(\labl(a_1) = \labl(a_2)\), and yet \(a_1 \neq a_2\).

\subsection{CCS and its Encoding in Configuration Structures}
\label{sec:css-enc}
The set of CCS processes \(\proc\) is inductively defined:
\begin{align}
	P,Q & \coloneqq \lambda . P \BNFsepa P\mid Q \BNFsepa \lambda . P+ \nu . Q \BNFsepa P\bs a \BNFsepa 0 \tag{CCS Processes}
\end{align}

In the category of configuration structures \(\catcs\) (\autoref{def:cat_lcs}) one can \enquote{match} the process constructors of CCS with the categorical operations of \autoref{cat-op-def}.

\begin{definition}[{Encoding a CCS process~\cite[p.~57]{Winskel1995}}]
	\label{def:encoding-CCS-cf}
	Given a CCS process \(P\), its encoding \(\enc{P}\) as a configuration structure is built inductively:%
	\begin{align*}
		\enc{\lambda . P}         & =\lambda . \enc{P}                                                                        %
		                          &                                &  & \enc{P\mid Q} & =\enc{P}\mid\enc{Q} &            &  &
		\enc{\lambda .P + \nu .Q} & =\enc{\lambda .P}+\enc{\nu .Q}                                                            \\
		\enc{P \bs a}             & =\enc{P}\restr{a}              &  &               & \enc{0}             & =\confzero
	\end{align*}
\end{definition}
For now on we assume that all structures use the same set of labels \(\labels\).

\begin{definition}[Auto-concurrency in CCS]
	\label{def:autocrccs}
	A process \(P\) is \emph{without auto-concurrency} %
	if \(\enc{P}\) is.
\end{definition}

\subsection{(Hereditary) History-Preserving Bisimulations}
\label{sec:hpbs}
HPB~\cite{Rabinovich1988,Phillips2011}, {\cite[Theorem~4]{Baldan2010}} %
and \HHPB~{\cite[Definition 1.4]{Bednarczyk1991}, {\cite[Theorem~1]{Baldan2010}}} %
are equivalences on configuration structures that use label- and order-preserving bijections between the events of the two configuration structures.

\begin{definition}[Label- and order-preserving functions]%
	A function \(f : x_1 \to x_2\), for \(x_i\in C_i\), \(i\in\{1,2\}\) is label-preserving if \(\labl_1(e) = \labl_2(f(e))\) for all \(e \in x_1\).
	It is \emph{order-preserving} if \(e_1 \leqslant_{x_1} e_2 \Rightarrow f(e_1) \leqslant_{x_2} f(e_2)\), for all \(e_1, e_2 \in x_1\). %
\end{definition}

\begin{definition}[\HPB and \HHPB]
	\label{def:hpb}
	\label{def:hhpb}
	A relation \(\rel\subseteq C_1\times C_2\times (E_1 \rightharpoonup E_2)\) %
	such that \((\emptyset,\emptyset,\emptyset)\in {\rel}\), and if \((x_1,x_2,f)\in {\rel}\), then \(f\) is a label- and order-preserving bijection between \(x_1\) and \(x_2\) and (\ref{hpb1}) and (\ref{hpb2}) (resp.\ (\ref{hpb1}--\ref{hpb4})) hold is called a \emph{history-} (resp.\ \emph{hereditary history-}) \emph{preserving bisimulation between \(\conf_1\) and \(\conf_2\)}.
	\begin{align}
		\forall y_1, x_1\redl{e_1}y_1 \Rightarrow
		\exists y_2, g, x_2\redl{e_2}y_2, g\restr{x_1} = f, (y_1, y_2, g)\in {\rel} \label{hpb1}     \\
		\forall y_2, x_2\redl{e_2}y_2 \Rightarrow
		\exists y_1, g, x_1\redl{e_1}y_1', g\restr{x_1} = f, (y_1, y_2, g)\in {\rel} \label{hpb2}    \\
		\forall y_1, x_1 \revredl{e_1} y_1 \Rightarrow
		\exists y_2, g, x_2 \revredl{e_2} y_2, g = f\restr{x_1}, (y_1, y_2, g)\in {\rel}\label{hpb3} \\
		\forall y_2, x_2 \revredl{e_2} y_2 \Rightarrow
		\exists y_1, g, x_1 \revredl{e_1} y_1, g = f\restr{x_1}, (y_1, y_2, g)\in {\rel} \label{hpb4}
	\end{align}
\end{definition}

Note that the bijection on events is preserved from one step to the next.
This condition can be weakened, and we call the corresponding relations the \emph{weak-function} \HPB and \emph{weak-function} \HHPB~{\cite[Definition1.4]{Bednarczyk1991}}, {\cite[Definition3.11]{Phillips2011}}%
\footnote{The names \emph{weak}-HPB and \emph{weak}-\HHPB are more common~\cite{Bednarczyk1991,Phillips2011}, but can be confused with the \emph{weak} equivalences of process algebra, which refers to ignoring \(\tau\)-transitions.}.

\begin{definition}[\wfHPB]
	\label{def:whpb}
	A \emph{weak-function history-preserving bisimulation between \(\conf_1\) and \(\conf_2\)} is a relation \(\rel\subseteq C_1\times C_2\times (E_1 \rightharpoonup E_2 )\) such that \((\emptyset,\emptyset,\emptyset)\in {\rel}\) and if \((x_1,x_2,f)\in {\rel}\), then \(f\) is a label- and order-preserving bijection between \(x_1\) and \(x_2\) and
	\begin{align*}
		\forall y_1, x_1\redl{e_1}y_1 \Rightarrow
		\exists y_2, g, x_2\redl{e_2}y_2, (y_1,y_2,g)\in {\rel} \\
		\forall y_2, x_2\redl{e_2}y_2 \Rightarrow
		\exists y_1, g, x_1\redl{e_1}y_1, (y_1,y_2,g)\in {\rel}%
	\end{align*}
\end{definition}

Similarly one defines \wfHHPB.
If there is a \HPB %
between \(\conf_1\) and \(\conf_2\), we %
just write that \(\conf_1\) and \(\conf_2\) are \HPB, and similarly for \HHPB, \wfHPB and \wfHHPB.

\begin{figure}
	\hspace{4em}
	\begin{minipage}[b]{.3\linewidth}
		\begin{tikzpicture}[scale=0.75]
			\node (emptyset) at (1,-1.5) {\(\emptyset\)};
			\node (a) at (-0.5,0) {\(\{a_1\}\)};
			\node (b) at (1,0) {\(\{b\}\)};
			\node (ab) at (1, 1.5) {\(\{a_1, b\}\)};
			\node (aa) at (-0.5,1.5) {\(\{a_1, a_2\}\)};
			\node[align=center] (aab) at (1,3) {\(\{a_1, a_2, b\}\)};
			\draw [->] (emptyset) -- (a);
			\draw [->] (emptyset) -- (b);
			\draw [->] (a) -- (aa);
			\draw [->] (a) -- (ab);
			\draw [->] (b) -- (ab);
			\draw [->] (b) -- (ab);
			\draw [->] (ab) -- (aab);
			\draw [->] (aa) -- (aab);
		\end{tikzpicture}
	\end{minipage}
	\hfill
	\begin{minipage}[b]{.35\linewidth}
		\begin{tikzpicture}[scale=0.75]
			\node (emptyset) at (1,-1.5) {\(\emptyset\)};
			\node (a1) at (-1,0) {\(\{a_1\}\)};
			\node (a2) at (1,0) {\(\{a_2\}\)};
			\node (b) at (3, 0) {\(\{b\}\)};
			\node (aa) at (-1, 1.5) {\(\{a_1,a_2\}\)};
			\node (a1b) at (1,1.5) {\(\{a_1, b\}\)};
			\node (a2b) at (3, 1.5) {\(\{a_2, b\}\)};
			\node[align=center] (aab) at (1,3) {\(\{a_1, a_2, b\}\)};
			\draw [->] (emptyset) -- (a1);
			\draw [->] (emptyset) -- (a2);
			\draw [->] (emptyset) -- (b);
			\draw [->] (a1) -- (a1b);
			\draw [->] (a2) -- (a2b);
			\draw [->] (a1) -- (aa);
			\draw [->] (a2) -- (aa);
			\draw [->] (b) -- (a1b);
			\draw [->] (b) -- (a2b);
			\draw [->] (a1b) -- (aab);
			\draw [->] (a2b) -- (aab);
			\draw [->] (aa) -- (aab);
		\end{tikzpicture}
	\end{minipage}
	\hspace{4em}
	\caption{There is no \HHPB relation between \(\enc{a.a \mid b}\) and \(\enc{a \mid a \mid b}\)%
		.}\label{fig:not_in_hhpb}
\end{figure}

\begin{example}
	\label{ex:a.a|b}
	Observe that \(\enc{a.a \mid b}\) and \(\enc{a \mid a \mid b}\), presented in \autoref{fig:not_in_hhpb}, are \HPB, but not \HHPB.
	Any \HHPB relation would have to associate the maximal configurations \(\{a_1, a_2, b\}\) of the two structures and to construct a bijection: taking \((a_1 \mapsto a_1, a_2 \mapsto a_2, b \mapsto b)\) wouldn't work, since \(\enc{a \mid a \mid b}\) can backtrack on \(a_2\) and \(\enc{a.a \mid b}\) cannot.
	Taking the other bijection, \((a_1 \mapsto a_2, a_2 \mapsto a_1, b \mapsto b)\), fails too, since \(\enc{a \mid a \mid b}\) can backtrack on \(a_1\) and \(\enc{a.a \mid b}\) cannot.

\end{example}

\begin{example}[\cite{Glabbeek2001}]
	The processes \(a\mid (b+c) + a\mid b + (a+c) \mid b\) and \(a\mid(b+c) + (a+c)\mid b\) is another example of processes whose encodings are \HPB but not \HHPB.
\end{example}

\begin{example}
	Finally, observe that \(a.(b+b)\) and \(a.b+a.b\) are not structuraly congruent in CCS, but the encoding of the two processes are \HHPB.
\end{example}

\subsection{Reversible CCS and Coherent Memories}
\label{sec:RCCS}
Let \(\ids\) be a set of identifiers, and \(i, j, k\) range over elements of \(\ids\).
The set of RCCS processes \(\rproc\) is built on top of the set of CCS processes \(\proc\) (\autoref{sec:cs}):
\begin{align}
	e    & \coloneqq \mem{i, \lambda, P}                                            & \tag{Memory Events} \\
	m    & \coloneqq \emptymem \BNFsepa \fork . m \BNFsepa e. m \tag{Memory Stacks}                       \\
	T, U & \coloneqq m \rhd P \tag{Reversible Thread}                                                     \\
	R, S & \coloneqq T \BNFsepa T \mid U \BNFsepa R\bs a \tag{RCCS Processes}
\end{align}

We denote \(\ids(m)\) (\resp \(\ids(R)\), \(\ids(e)\)) the set of identifiers occurring in \(m\) (\resp \(R\), \(e\)), and always take
\(\ids = \integer\).
A structural congruence \(\congru\) can be defined on RCCS terms~\cite[Definition~5]{Aubert2016jlamp}, %
the only rule we will use here is \emph{distribution of memory}: \(m \rhd (P \mid Q) \congru (\fork . m \rhd P) \mid (\fork . m \rhd Q)\).
We also note that \(R\congru (\prod_{i} m_i\rhd P_i)\bs \namelist{a}\), for any reversible process \(R\) and for some names \(\namelist{a}\), memories \(m_i\) and CCS processes \(P_i\), writing \(\prod_{i}\) the \(i\)-ary parallel composition. %

\begin{figure}
	{
		\centering

		\begin{prooftree}
			\hypo{}
			\infer[left label=\(i \notin \ids(m)\)]1[act.]{(m \vartriangleright \lambda . P + Q) \fwlts{i}{\lambda} \mem{i, \lambda, Q} . m \vartriangleright P}
		\end{prooftree}
		\hfill
		\begin{prooftree}
			\hypo{R \fwlts{i}{\lambda} R'}
			\hypo{S \fwlts{i}{\out{\lambda}} S'}
			\infer2[syn.]{R \mid S \fwlts{i}{\tau} R' \mid S'}
		\end{prooftree}
		\\[2em]
		\begin{prooftree}
			\hypo{}
			\infer[left label=\(i \notin \ids(m)\)]1[act.\(_*\) ]{\mem{i, \lambda, Q} . m \vartriangleright P \bwlts{i}{\lambda} m \vartriangleright (\lambda . P + Q)}
		\end{prooftree}
		\hfill
		\begin{prooftree}
			\hypo{R \bwlts{i}{\lambda} R'}
			\hypo{S \bwlts{i}{\out{\lambda}} S'}
			\infer2[syn.\(_*\)]{R \mid S \bwlts{i}{\tau} R' \mid S'}
		\end{prooftree}
		\\[2em]
		\begin{prooftree}
			\hypo{R \fbwlts{i}{\alpha} R'}
			\infer[left label=\(i \notin \ids(S)\)]1[par.]{R \mid S \fbwlts{i}{\alpha} R' \mid S}
		\end{prooftree}
		\hfill
		\begin{prooftree}
			\hypo{R \fbwlts{i}{\alpha} R'}
			\hypo{a \notin \alpha}
			\infer2[res.]{ R\bs a \fbwlts{i}{\alpha} R'\bs a}
		\end{prooftree}
		\hfill
		\begin{prooftree}
			\hypo{R_1 \congru R \fbwlts{i}{\alpha} R' \congru R_1'}
			\infer1[\(\congru\)]{R_1 \fbwlts{i}{\alpha} R_1'}
		\end{prooftree}

	}
	\caption{Rules of the labeled transition system}\label{ltsrules}
\end{figure}

The labeled transition system for RCCS is given by the rules of \autoref{ltsrules}.
We use \(\fbwlts{i}{\alpha}\) as a wildcard for \(\redl{i:\alpha}\) (forward) or \(\revredl{i:\alpha}\) (backward transition), and if there are indices \(i_1, \hdots, i_n\) and labels \(\alpha_1, \hdots, \alpha_n\) such that \(R_1 \fbwlts{i_1}{\alpha_1} \cdots \fbwlts{i_n}{\alpha_n} R_n\), then we write \(R_1 \tfbwlts R_n\).
If there is a CCS term \(P\) such that \(\emptymem\rhd P\tfbwlts R\), we say that \(R\) is \emph{reachable}, that \(P\) is the \emph{origin} of \(R\)~\cite[Lemma 1]{Aubert2016jlamp} and write \(P = \orig{R}\).
Similarly to what we did in \autoref{def:autocrccs}, we will write that \(R\) is without auto-concurrency if \(\enc{\orig{R}}\) is.
An example of the execution of a reversible process is given at the beginning of \autoref{ex:proc}.

Note that we \emph{cannot} work up to \(\alpha\)-renaming of the identifiers: concurrent, distributed computation is about splitting threads between independent units of computation.
If one of unit were to re-tag a memory event \(\mem{89, \lambda, P}\) as \(\mem{1, \lambda, P}\), and another were to try to backtrack on the memory event \(\mem{89, \out\lambda, P}\), then the trace of the synchronization would be lost, and backtracking made impossible.
Since we don't want to keep a \enquote{global index} %
which would go against the benefits of distributed computation, the only %
option is to forbid \(\alpha\)-renaming of identifiers.

Memory coherence~\cite[Definition 1]{Danos2004} %
was defined for RCCS processes with less structured memory events (i.e., without identifiers), but can be adapted.

\begin{figure}

	{
		\centering

		\begin{prooftree}
			\hypo{}
			\infer1[em.]{\emptymem \coh \emptymem}
		\end{prooftree}
		\qquad
		\begin{prooftree}
			\hypo{m \coh m'}
			\infer[left label={\(\ids(e) \notin \ids(m) \cup \ids(m')\)}]1[ev.]{e.m \coh m'}
		\end{prooftree}
		\\[2em]
		\begin{prooftree}
			\hypo{m \coh m'}
			\infer[left label={\(i \notin \ids(m) \cup \ids(m')\)}]1[syn.]{\mem{i, \lambda, P}.m \coh \mem{i, \out{\lambda}, Q}.m'}
		\end{prooftree}
		\qquad
		\begin{prooftree}
			\hypo{m \coh \emptyset}
			\infer1[fo.]{\fork.m \coh \fork.m}
		\end{prooftree}

	}
	\caption{Rules for the coherence relation}
	\label{fig:coh-rules}
\end{figure}

\begin{definition}[Coherence relation]
	\label{def:coherence}
	Coherence, written \(\coh\), is the smallest symmetric relation on memory stacks such that %
	rules of \autoref{fig:coh-rules} hold.
\end{definition}

Note that \(\coh\) is not reflexive, and hence not an equivalence, nor anti-reflexive. %
For the rest of this paper, we will just write \enquote{memory} for \enquote{memory stack}.

\begin{definition}[Coherent processes, {\cite[Definition 2]{Danos2004}}]
	\label{def:coh}
	A RCCS process is \emph{coherent} if all of its memories are pairwise coherent, or if its only memory is coherent with \(\emptyset\).
\end{definition}

We require the memory to be coherent with \(\emptyset\) to make it impossible to have \(\mem{1, \lambda, P} . \mem{1, \nu, Q} . \emptyset\) as a memory in a coherent process.

\begin{lemma}[{\cite[Lemma 5]{Danos2005}}]
	\label{lem:reach-coh}
	If \(R \tfbwlts S\) and \(R\) is coherent then so is \(S\).
\end{lemma}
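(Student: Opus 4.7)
I would proceed by induction on the length of the transition chain $R \tfbwlts S$. The zero-step case is immediate by assumption, so it suffices to show that coherence is preserved by a single transition $R \fbwlts{i}{\alpha} R'$. I would then do a case analysis on the last rule used in the derivation of $R \fbwlts{i}{\alpha} R'$ among the rules of \autoref{ltsrules}.

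The structural rules (par., res., $\congru$) are straightforward: par. and res. do not modify any memory in the subprocess, so coherence of $R$ directly gives coherence of $R'$; for $\congru$, one needs an auxiliary observation that structural congruence preserves the multiset of memories (up to the distribution rule $m \rhd (P\mid Q) \congru (\fork.m\rhd P)\mid(\fork.m\rhd Q)$, which is covered by the \emph{fo.} rule of \autoref{fig:coh-rules}, since if $m \coh m'$ for every other memory $m'$, then also $\fork.m \coh \fork.m$ by \emph{fo.} and $\fork.m \coh m'$ follows by \emph{ev.}). For the base rules, the backward cases (act.$_*$, syn.$_*$) are easy because removing the top memory event $\mem{i,\lambda,Q}$ (respectively a matching pair with the same $i$) from a coherent memory $\mem{i,\lambda,Q}.m$ leaves a memory $m$ that is still coherent with every other memory, by inversion of \emph{ev.} (resp.\ \emph{syn.}) in \autoref{fig:coh-rules}.

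The interesting forward cases are act.\ and syn. For act., we go from $m\rhd \lambda.P+Q$ to $\mem{i,\lambda,Q}.m\rhd P$, and we must show that for every other memory $m'$ in $R'$ we have $\mem{i,\lambda,Q}.m \coh m'$, which follows by the \emph{ev.} rule from $m\coh m'$ provided $i \notin \ids(m)\cup\ids(m')$. For syn., we produce two fresh events $\mem{i,\lambda,P}$ and $\mem{i,\out\lambda,Q}$ on top of coherent memories $m_1,m_2$; they are coherent with each other by \emph{syn.}, and with every other memory $m'$ by \emph{ev.} The single-memory sub-case (where $R'$ has only one memory) reduces to showing coherence with $\emptymem$, which again follows from \emph{ev.} or \emph{em.}

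The main obstacle is the side condition on identifier freshness: the rules only require $i\notin \ids(m)$ locally, whereas coherence of $R'$ requires $i$ to be globally fresh with respect to all identifiers occurring in $R$. I would handle this by observing that in a coherent process, all identifiers appearing across memories are pairwise distinct, so the act.\ and syn.\ rules are in practice applied with a globally fresh $i$ (as is standard in the RCCS literature, and as implicitly enforced by the coherence side conditions themselves). With this reading, the inductive step goes through in each case, concluding the proof.
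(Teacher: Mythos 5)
First, note that the paper does not actually prove this lemma: it is imported verbatim from Danos and Krivine (Lemma 5 of the cited reference), so there is no in-paper proof to compare yours against. Your overall skeleton --- induction on the length of the trace, then case analysis on the last rule of \autoref{ltsrules} used for a single step --- is the standard and reasonable way to establish it, and the backward cases and structural cases are essentially fine.

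However, your treatment of the one step you yourself single out as \enquote{the main obstacle} is wrong. You claim that global freshness of the new identifier \(i\) follows from the observation that \enquote{in a coherent process, all identifiers appearing across memories are pairwise distinct}. That observation is false: identifiers are deliberately shared across memories when they record a synchronisation (rule syn.\ of \autoref{fig:coh-rules}) or a fork (the two copies of \(\fork.m\) produced by memory distribution), and the paper points this out explicitly right after \autoref{cor:unique_id}. More importantly, even if it were true it would be beside the point: the issue is not whether the \emph{existing} identifiers clash with one another, but whether the \emph{freshly chosen} \(i\) of the act.\ (or syn.) rule clashes with an identifier already present in some \emph{other} thread; the act.\ rule only guarantees \(i \notin \ids(m)\) for the local memory \(m\). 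The missing ingredient is the side condition \(i \notin \ids(S)\) on the par.\ rule of \autoref{ltsrules}: it is this condition, propagated through the derivation of the transition, that forces \(i\) to be fresh for every memory of the whole process, and hence lets you apply ev.\ (resp.\ syn.) of \autoref{fig:coh-rules} to re-establish coherence of the new memory with all the others. Without invoking it, the act.\ and syn.\ cases of your induction do not go through. A smaller quibble: in the \(\congru\) case you derive \(\fork.m \coh m'\) \enquote{by ev.}, but ev.\ only stacks memory events of the form \(\mem{i,\lambda,P}\), not fork markers, so that step needs a separate justification.
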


\begin{corollary}
	\label{cor:unique_id}
	For every reachable \(m\rhd P\) and \(i \in \ids(m)\), \(i\) occurs once in \(m\).
\end{corollary}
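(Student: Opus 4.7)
The plan is to bridge reachability and coherence via \autoref{lem:reach-coh}, and then extract uniqueness of identifiers from the structure of the coherence derivation. First I would invoke \autoref{lem:reach-coh}: since $m\rhd P$ is reachable, it is coherent, and by \autoref{def:coh} a coherent single-thread process must satisfy $m \coh \emptymem$ (the single-memory clause). So the corollary reduces to showing that whenever $m \coh \emptymem$, the identifiers in $\ids(m)$ are pairwise distinct.

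The core of the argument is an auxiliary lemma proved by induction on the derivation of coherence: \emph{for all memories $m$ and $m'$, if $m \coh m'$ then every $i \in \ids(m)$ occurs exactly once in $m$}. In the base case \emph{em.}, both sides are $\emptymem$ and there is nothing to check. In case \emph{ev.}, we have $m = e.m_0$ with $m_0 \coh m'$ and the freshness side condition $\ids(e) \cap (\ids(m_0)\cup\ids(m')) = \emptyset$; the induction hypothesis gives uniqueness of identifiers inside $m_0$, and the side condition prevents the identifier carried by $e$ from clashing with any of them, so identifiers of $e.m_0$ occur uniquely. Case \emph{syn.} is analogous, the side condition $i \notin \ids(m_0)\cup\ids(m_0')$ being precisely what is needed. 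Case \emph{fo.} is immediate because $\fork$ carries no identifier, so the induction hypothesis transfers verbatim.

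Applying this auxiliary lemma to the pair $m \coh \emptymem$ yielded by the first step delivers the corollary. The main subtlety, and the only real obstacle, is that $\coh$ is defined as the \emph{smallest symmetric} relation closed under the rules of \autoref{fig:coh-rules}: I must also handle derivations obtained by applying the rules with the two sides exchanged. This is cleanly resolved by strengthening the auxiliary statement so that it is symmetric in $m$ and $m'$, namely asserting uniqueness of identifiers on each side of every coherent pair; the resulting invariant is preserved by every rule because each constructor introducing an identifier does so under an explicitly fresh-on-both-sides side condition. Apart from this book-keeping, the induction is routine.
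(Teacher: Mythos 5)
Your proof is correct and takes essentially the same route as the paper: reachability gives coherence via \autoref{lem:reach-coh}, \autoref{def:coh} gives \(m \coh \emptymem\), and the freshness side conditions of the coherence rules forbid a repeated identifier. The only difference is that you prove a symmetric invariant by induction over arbitrary coherence derivations, whereas the paper simply observes that a derivation of \(m \coh \emptymem\) can only consist of em.\ followed by applications of ev.; your version is marginally more work but also a bit more robust, since it avoids having to justify that shape claim under the symmetric closure.
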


Note that the property above holds for reversible threads, %
and not for RCCS processes in general: indeed, we actually \emph{want} memory events to have the same identifiers if they result from a synchronization or a fork.

\begin{definition}[Back-and-forth bisimulation]
	\label{def:bfbism}
	A \emph{back-and-forth bisimulation} in RCCS is a relation \(\rel\subseteq \rproc \times \rproc\) such that if \((R_1, R_2)\in\rel\)
	\begin{align*}
		\forall S_1,R_1\fwlts{i}{\alpha}S_1\Rightarrow
		\exists S_2,R_2\fwlts{j}{\alpha}S_2, (S_1, S_2) \in\rel \\
		\forall S_2,R_2\fwlts{i}{\alpha}S_2\Rightarrow
		\exists S_1,R_1\fwlts{j}{\alpha}S_1, (S_1, S_2) \in\rel \\
		\forall S_1,R_1\bwlts{i}{\alpha}S_1\Rightarrow
		\exists S_2,R_2\bwlts{j}{\alpha}S_2, (S_1, S_2) \in\rel \\
		\forall S_2,R_2\bwlts{i}{\alpha}S_2\Rightarrow
		\exists S_1,R_1\bwlts{j}{\alpha}S_1, (S_1, S_2) \in\rel
	\end{align*}

\end{definition}

\begin{example}
	\label{ex:not_bf}
	The processes \(\emptymem\rhd (a.a \mid b)\) and \(\emptymem\rhd (a \mid a \mid b)\) are in a back-and-forth bisimulation.
\end{example}

\begin{theorem}[{\cite[Theorem 2]{Aubert2016jlamp}},\cite{Phillips2011}]
	\label{thm:previousresult}
	Back-and-forth bisimulation on RCCS processes without auto-concurrency corresponds to \HHPB on their encoding.
\end{theorem}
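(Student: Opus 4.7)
The plan is to translate between the two equivalences via the canonical configuration determined by a reachable RCCS process. For reachable $R$ with $R \congru (\prod_i m_i \rhd P_i) \bs \namelist{a}$, I would define $\enc{R} \subseteq \enc{\orig{R}}$ to be the configuration whose events are in bijection with the non-$\fork$ entries appearing across all the $m_i$; by \autoref{cor:unique_id} and coherence, this map is well-defined, since synchronization pairs share a single identifier. Two auxiliary observations carry the correspondence: (i) $R \fwlts{i}{\alpha} S$ implies $\enc{R} \redl{e} \enc{S}$ for a unique event $e$ of label $\alpha$, with $\enc{S} = \enc{R} \cup \{e\}$; and symmetrically (ii) $R \bwlts{i}{\alpha} S$ implies $\enc{R} \revredl{e} \enc{S}$. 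The converses also hold: any permissible configuration transition can be fired by a matching RCCS transition.

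From a back-and-forth bisimulation $\rel$ on RCCS, I would define
\[\rel' = \{(\enc{R_1}, \enc{R_2}, f_{R_1, R_2}) \mid (R_1, R_2) \in \rel\},\]
where $f_{R_1, R_2}$ sends the event of $\enc{R_1}$ with RCCS-identifier $i$ to the event of $\enc{R_2}$ whose identifier is obtained by replaying the reverse execution from $R_1$ down to $\orig{R_1}$ via the backward clauses of $\rel$, and mirroring on the right. This function is label-preserving because the \emph{act.} rule preserves labels when pushing memory events, and order-preserving because causal precedence in $\enc{\orig{R}}$ corresponds exactly to a deeper position in the memory stack (an event is only recorded after its enabling prefix fires). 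The clauses \eqref{hpb1}--\eqref{hpb4} of \autoref{def:hhpb} then follow directly from the four clauses of \autoref{def:bfbism} together with (i) and (ii).

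Conversely, from a \HHPB relation $\rel'$, I would define $\rel = \{(R_1, R_2) \mid \exists f,\ (\enc{R_1}, \enc{R_2}, f) \in \rel'\}$. Given $R_1 \fwlts{i}{\alpha} S_1$, observation (i) yields $\enc{R_1} \redl{e_1} \enc{S_1}$ with $\labl(e_1) = \alpha$; clause \eqref{hpb1} then produces $\enc{R_2} \redl{e_2} \enc{S_2}$ and an extension $g$ of $f$, and the converse of (i) lifts this to some $R_2 \fwlts{j}{\alpha} S_2$. The three remaining clauses of \autoref{def:bfbism} are symmetric, using \eqref{hpb2}--\eqref{hpb4} in turn.

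The main obstacle is precisely what forces the auto-concurrency restriction. Without it, a configuration $\enc{R}$ may enable several pairwise concurrent forward transitions carrying the same label $\alpha$: the \HHPB clause picks a \emph{specific} event $e_2$ to match $e_1$, while the RCCS back-and-forth bisimulation is insensitive to that choice, so the attempted reconstruction of $\rel$ from $\rel'$ collapses distinct candidates (as witnessed jointly by \autoref{ex:a.a|b} and \autoref{ex:not_bf}). Under the auto-concurrency ban, label plus source configuration determine the next event uniquely, so both directions line up; the remaining bookkeeping merely propagates the correspondence through $\congru$---notably through the memory-distribution rule---and through synchronizations, where the shared identifier guarantees that a single event of $\enc{\orig{R}}$ is produced on both sides.
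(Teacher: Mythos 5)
First, a point of order: the paper does not actually prove this statement --- it is recalled from prior work (\cite[Theorem 2]{Aubert2016jlamp}, \cite{Phillips2011}) and used as a black box --- so there is no in-paper proof to compare yours against. Judged on its own terms, your proposal has the right architecture: the address \(\enc{R}\) inside \(\enc{\orig{R}}\), the operational correspondence (i)/(ii), and the two translations of relations are all the standard ingredients, and the direction from \HHPB to back-and-forth bisimulation (where one simply forgets \(f\)) goes through essentially as you describe.

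The gap is in the other direction, and in the one sentence you offer to justify it. A back-and-forth bisimulation carries no bijection, so the entire content of the theorem is that one can be manufactured; your recipe (\enquote{replaying the reverse execution \dots via the backward clauses of \(\rel\)}) is exactly where the work lives, and you show neither that it is well defined (different replays, or different orders of undoing concurrent events, must yield the same \(f_{R_1,R_2}\)) nor that the resulting family of bijections satisfies the coherence conditions \(g\restr{x_1}=f\) linking consecutive triples of \autoref{def:hhpb}. Worse, the justification you give --- \enquote{label plus source configuration determine the next event uniquely} --- is false for forward transitions even without auto-concurrency: in \(\enc{a+a}\) the empty configuration enables two distinct \(a\)-labelled events, which are in conflict rather than concurrent, so \autoref{def:autoc} imposes nothing on them. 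What the no-auto-concurrency hypothesis actually buys is \emph{backward} label-determinism: any two events backward-enabled from a configuration \(x\) are necessarily concurrent in \(x\), hence cannot share a label, so the matching of backward moves in \autoref{def:bfbism} is forced and the replay becomes canonical. Your diagnosis of where the restriction bites is therefore pointing at the wrong half of the transition relation, and as written the construction of the bijection --- the heart of the theorem --- is not established.
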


Note that this result does not hold for the processes in~\autoref{ex:not_bf} (see also~\autoref{ex:a.a|b}). This does not contradict the theorem, as the processes in~\autoref{ex:not_bf} are with auto-concurrency.

\section{Lifting the Restrictions}
\label{sec:lifting}
To define a bisimulation on RCCS (with auto-concurrency) that corresponds to \HHPB we first have to encode the memories of a reversible process into a structure similar to the configuration structures, called \emph{identified configuration structures} (\autoref{sec:ics}).
We can then define the encoding (\autoref{sec:enc_mem}), and the equivalences in RCCS that use this encoding of memories (\autoref{sec:equiv_rccs}).

\subsection{Identified Configuration Structures}
\label{sec:ics}
\begin{definition}[Identified configuration structure]\label{def:iconf}
	An \emph{identified configuration structure}, or \emph{\(\iconf\)-structure}, \(\iconf = (E, C, \labl, \ids, \iden)\) is a configuration structure \(\conf = (E, C , \labl)\) endowed with a set of identifiers \(\ids\) and a function \(\iden : E \to \ids\) such that, \begin{equation}
		\forall x \in C, \forall e_1, e_2 \in x, \iden(e_1) \neq \iden(e_2) \tag{Collision Freeness}\label{eq:collision}
	\end{equation}
	We call \(\conf\) \emph{the underlying configuration structure} of \(\iconf\) and write \(\iconf = \conf + \iden\). %
	We write \(\iconfzero\) for the identified configuration structure with \(E = \emptyset\).
\end{definition}

For the rest of this paper, we omit \(\ids\), %
and assume that we are always given \(\iconf = (E, C, \labl, \iden)\) and
\(\iconf_i = (E_i, C_i, \labl_i, \iden_i)\), for \(i = 1, 2\).

\begin{example}
	\autoref{fig:ex:4}, with \(\iden(a) = 1\), \(\iden(\out{a}) = 2\), \(\iden(b) = 3\) and \(\iden(\tau) = 4\), is a \(\iconf\)-structure.
	Note that it is possible to have %
	fewer identifiers than events: take \(\ids = \{1, 2, 3\}\) and \(\iden(a) = \iden(\tau) = 1\), \(\iden(\out{a}) = 2\) and \(\iden(b) = 3\).%
\end{example}

For the following remark, we need to suppose that every configuration structure is endowed with a total ordering on its events.

\begin{remark}
	\label{remark1}
	Every configuration structure can be mapped to a \(\iconf\)-structure.
\end{remark}

The mapping is trivial: take \(\ids\) to be \(\{1, \hdots, \card E\}\), and define \(\iden : E \to \ids\) to follow the ordering on \(E\).
Note that, in this case, \(\iden\) is a bijection.

\begin{definition}[Category of \(\iconf\)-structures]
	\label{def:cat_ics}
	We define \(\catics\) the category of identified configuration structure, where objects are \(\iconf\)-structures, and a morphism \(f\) from \(\iconf_1\) to \(\iconf_2\) is a triple \((f_E, f_C, f_{\iden})\) such that
	\begin{itemize}
		\item \((f_E, f_C)\) is a morphism in \(\catcs\) from \((E_1, C_1, \labl_1)\) to \((E_2, C_2, \labl_2)\);
		\item \(f_{\iden} : \ids_1 \to \ids_2\) preserves identifiers: \(f_{\iden}(\iden_1(e)) = \iden_2(f_E(e))\).
	\end{itemize}
\end{definition}

We denote \(\functoric : \catics \to \catcs\) the forgetful functor.

\begin{definition}[Operations on \(\iconf\)-structures]
	\label{icat-op-def}
	\begin{description}[labelsep=.3333em]
		\item[The product] of \(\iconf_1\) and \(\iconf_2\) is \(\iconf_1\times\iconf_2 = (E,C,\labl,\iden)\):
		      \begin{itemize}
			      \item \((E_1, C_1, \labl_1) \times (E_2,C_2,\labl_2) = (E,C,\labl)\) is
			            the product in the category of configuration structures with projections \(\pi_i:(E,C,\labl) \to (E_i,C_i,\labl_i)\);
			      \item \(\iden:E_1\times_{\star} E_2 \to (\ids_1\cup\{\iden_{\star}\})\times (\ids_2\cup\{\iden_{\star}\}) \), for \(\iden_{\star}\notin\ids_1\cup\ids_2\), is defined as

			            \[
				            \iden(e) = \begin{dcases*}
					            (\iden_1(\pi_1(e)),\iden_{\star}) & if \(\pi_2(e)=\star\)\\
					            (\iden_{\star},\iden_2(\pi_2(e))) & if \(\pi_1(e)=\star\)\\
					            (\iden_1(\pi_1(e)),\iden_2(\pi_2(e))) & otherwise
				            \end{dcases*}
			            \]
			            with the projections \(p_i:\iden\to\iden_i\cup\{\iden_{\star}\}\).
		      \end{itemize}
		      Define the projections \(\gamma_i : \iconf_1 \times \iconf_2 \to \iconf_i\) as the pair \((\pi_i,p_i)\).

		\item[The relabeling] of \(\iconf_1\) along \(\relabl: E_1 \to L\) is \(\relabl \circ \iconf_1 = (E_1, C_1, \relabl, \iden_1)\).

		\item[The restriction] of \(\iconf_1\) to \(E \subseteq E_1\) is \(\iconf_1 \restr{E} = \big((E_1, C_1, \labl_1)\restr{E} + \iden_1\restr{E}\big)\).%

		\item[The parallel composition] of \(\iconf_1\) and \(\iconf_2\) is \(\iconf_1 \mid \iconf_2 = \big( r \circ (\iconf_1\times \iconf_2)\big) \restr{F}\), with %
		      \begin{itemize}
			      \item \(\iconf_1\times \iconf_2 = \iconf_3 = (E_3, C_3, \labl_3, \iden_3)\) is the product of \(\iconf\)-structures.

			      \item \(\relabl \circ \iconf_3 \)%
			            , with \(\relabl : E_3 \to L \cup \{\bot\}\) defined as follows,%
			            \[
				            \relabl(e) = \begin{dcases*}
					            \bot
					            & if \(\pi_1(e)=e_1\neq\star \wedge \pi_2(e)=e_2\neq\star \wedge \iden_1(e_1) \neq \iden_2(e_2)\)\\
					            & or if \(\pi_1(e)=e_1\neq\star\wedge\pi_2(e)=\star\)\\
					            & \(\qquad \wedge (\exists e_2 \in E_2\), \st \(\iden_1(e_1) = \iden_2(e_2))\)\\
					            & or if \(\pi_2(e)=e_2\neq\star\wedge\pi_1(e)=\star\)\\
					            & \(\qquad\wedge (\exists e_1 \in E_1\), \st \(\iden_1(e_1) = \iden_2(e_2))\)\\
					            \tau
					            & if \(\pi_1(e)=e_1\neq\star \wedge \pi_2(e)=e_2\neq\star \wedge \iden_1(e_1)=\iden_2(e_2)\)\\
					            & and \(\labl_3(e)=(\alpha,\out\alpha)\) \hfill\emph{(Valid Synchronisations)}\\
					            \alpha
					            & if \(\pi_1(e)=e_1\neq\star \wedge \pi_2(e)=e_2\neq\star \wedge \iden_1(e_1)=\iden_2(e_2)\)\\
					            & and \(\labl_3(e)=(\alpha,\alpha)\) \hfill\emph{(Valid Forks)}\\
					            \labl_3(e) & otherwise
				            \end{dcases*}
			            \]%

			      \item \((r \circ \iconf_3)\restr{F}\), where %
			            \(F = \{e\in E_3 \mid r (e)\neq\bot\}\).
		      \end{itemize}

	\end{description}
\end{definition}

In the definition of parallel composition, \(\relabl\) detects the wrong synchronization (or fork) pairs: if two events occur at the same time, then they must have the same identifier.
And if an event occurs alone, then no other event can occur with the same identifier.

\subsection{Encoding the Memory of Reversible Processes}
\label{sec:enc_mem}
\begin{definition}[Encoding a RCCS memory]
	\label{def:encode_m}
	The encoding of the memory of a RCCS process in a \(\iconf\)-structure is defined\footnote{We make the abuse of notation of writing \(\encm{\cdot}\) for the encoding of both a reversible process and a memory, into a \(\iconf\)-structure.} by induction on the process:
	\begin{align*}
		\encm{m \rhd P}               & = \encm{m}                  &  &  &
		\encm{R_1 \mid R_2}           & = \encm{R_1} \mid\encm{R_2} &  &  &
		\encm{R\bs a}                 & = \encm{R}                          \\
		\encm{\mem{i, \alpha, P} . m} & = (E, C, \iden)             &  &  &
		\encm{\fork . m}              & = \encm{m}                  &  &  &
		\encm{\emptymem}              & = \confzero
	\end{align*}
	For \( \encm{\mem{i, \alpha, P} . m}\), letting \(\encm{m} = (E_m, C_m, \labl_m, \iden_m)\) and \(e \notin E_m\), we let \((E, C, \labl, \iden)\) be %
	\(E = E_m \cup e\), \(C = C_m \cup %
	\{x \cup \{e\} \setst x \in C_m \text{ is maximal}\}\), \( \labl = \labl_m \cup \{e \mapsto \alpha\}\) and \(\iden (e') = \iden_m (e')\) if \(e' \neq e\), and \(\iden (e) = i\) otherwise.
\end{definition}

The memories of any RCCS process could be encoded into \(\iconf\)-structures, but we will encode only reachable (and thus coherent) processes (\autoref{def:coh}).

\begin{figure}
	\begin{minipage}[b]{.09\linewidth}
		\begin{tikzpicture}[scale=0.75]
			\node (emptyset) at (0,0) {\(\emptyset\)};
			\node (a1) at (0, 1.5) {\(\{a\}\)};
			\node (a2) at (0, 3) {\(\{a, b\}\)};
			\draw [->] (emptyset) -- (a1);
			\draw [->] (a1) -- (a2);
		\end{tikzpicture}
		\subcaption{}\label{fig:ex_mem1}
	\end{minipage}
	\hfill
	\begin{minipage}[b]{.09\linewidth}
		\begin{tikzpicture}[scale=0.75]
			\node (emptyset) at (0,0) {\(\emptyset\)};
			\node (a1) at (0, 1.5) {\(\{c\}\)};
			\node (a2) at (0, 3) {\(\{c, \out{a}\}\)};
			\draw [->] (emptyset) -- (a1);
			\draw [->] (a1) -- (a2);
		\end{tikzpicture}
		\subcaption{}\label{fig:ex_mem2}
	\end{minipage}
	\hfill
	\begin{minipage}[b]{.25\linewidth}
		\begin{tikzpicture}[scale=0.75]
			\node (emptyset) at (0,0) {\(\emptyset\)};
			\node (a1) at (0, 1.5) {\(\{(\star, c)\}\)};
			\node (a2) at (0, 3) {\(\{(\star, c), (a, \out{a}) \}\)};
			\node (a3) at (0, 4.5) {\(\{(\star, c), (a, \out{a}), (b, \star) \}\)};
			\draw [->] (emptyset) -- (a1);
			\draw [->] (a1) -- (a2);
			\draw [->] (a2) -- (a3);
		\end{tikzpicture}
		\subcaption{}\label{fig:ex_mem3}
	\end{minipage}
	\hfill
	\begin{minipage}[b]{.22\linewidth}
		\begin{tikzpicture}[scale=0.75]
			\node (emptyset) at (0,0) {\(\emptyset\)};
			\node (a1) at (-1, 1.5) {\(\{a\}\)};
			\node (a2) at (-1, 3) {\(\{a, b\}\)};
			\node (a4) at (1, 3) {\(\{c, a\}\)};
			\node (a3) at (1, 1.5) {\(\{c\}\)};
			\node (a5) at (0, 4.5) {\(\{c, a, b\}\)};
			\draw [->] (emptyset) -- (a1);
			\draw [->] (emptyset) -- (a3);
			\draw [->] (a1) -- (a2);
			\draw [->] (a1) -- (a4);
			\draw [->] (a3) -- (a4);
			\draw [->] (a4) -- (a5);
			\draw [->] (a2) -- (a5);
		\end{tikzpicture}
		\subcaption{}\label{fig:ex_mem4}
	\end{minipage}
	\caption{(Identified) Configurations Structures of Examples~\ref{ex:proc} and \ref{ex:proc2}}
\end{figure}
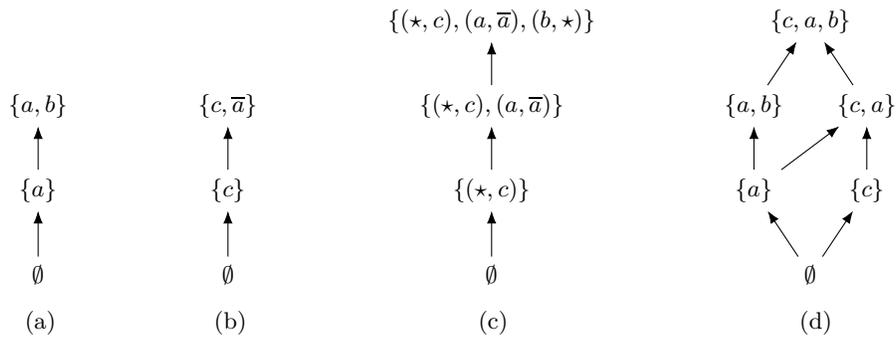

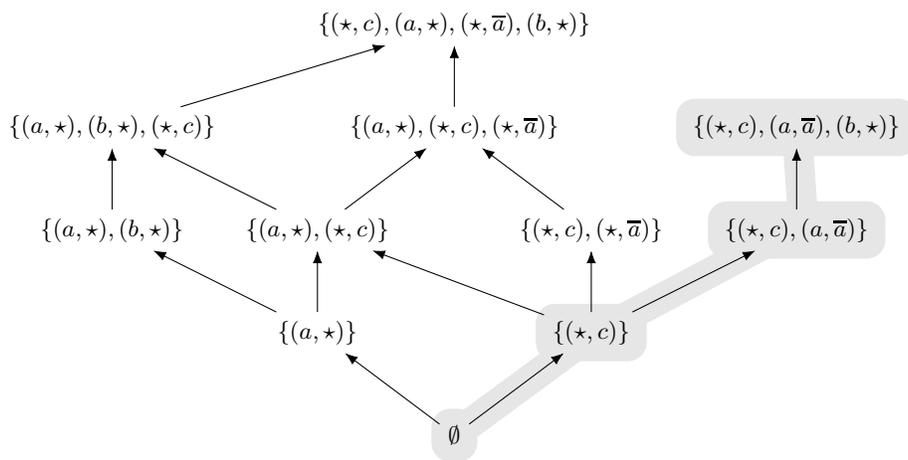
\begin{figure}
	\begin{tikzpicture}[scale=0.91]%

		\node (emptyset) at (0,0) {\(\emptyset\)};
		\node[fill=gray!20, fit=(emptyset), rounded corners=.25cm] {};
		\node (a2) at (2, 1.5) {\(\{(\star, c)\}\)};
		\node[fill=gray!20, fit=(a2), rounded corners=.25cm] {};
		\node (b4) at (5, 3) {\(\{(\star, c), (a, \out{a})\}\)};
		\node[fill=gray!20, fit=(b4), rounded corners=.25cm] {};
		\node (c3) at (5, 4.5) {\(\{(\star, c), (a, \out{a}), (b, \star)\}\)};
		\node[fill=gray!20, fit=(c3), rounded corners=.25cm] {};
		\draw[gray!20, line width=4mm%
			, rounded corners
		] (emptyset) -- ($(a2)+(-0.1, -0.1)$) -- ($(b4)+(0.1, 0.1)$) -- (c3);

		\node (emptyset) at (0,0) {\(\emptyset\)};
		\node (a1) at (-2, 1.5) {\(\{(a, \star)\}\)};
		\node (a2) at (2, 1.5) {\(\{(\star, c)\}\)};
		\node (b1) at (-5, 3) {\(\{(a, \star), (b, \star)\}\)};
		\node (b2) at (-2, 3) {\(\{(a, \star), (\star, c)\}\)};
		\node (b3) at (2, 3) {\(\{(\star, c), (\star, \out{a})\}\)};
		\node (b4) at (5, 3) {\(\{(\star, c), (a, \out{a})\}\)};
		\node (c1) at (-5, 4.5) {\(\{(a, \star), (b, \star), (\star, c)\}\)};
		\node (c2) at (0, 4.5) {\(\{(a, \star), (\star, c), (\star, \out{a})\}\)};
		\node (c3) at (5, 4.5) {\(\{(\star, c), (a, \out{a}), (b, \star)\}\)};
		\node (d1) at (0, 6) {\(\{(\star, c), (a, \star), (\star, \out{a}), (b, \star)\}\)};
		\draw [->] (emptyset) -- (a1);
		\draw [->] (emptyset) -- (a2);
		\draw [->] (a1) -- (b1);
		\draw [->] (a1) -- (b2);
		\draw [->] (a2) -- (b2);
		\draw [->] (a2) -- (b3);
		\draw [->] (a2) -- (b4);
		\draw [->] (b1) -- (c1);
		\draw [->] (b2) -- (c1);
		\draw [->] (b2) -- (c2);
		\draw [->] (b3) -- (c2);
		\draw [->] (b4) -- (c3);
		\draw [->] (c1) -- (d1);
		\draw [->] (c2) -- (d1);
	\end{tikzpicture}
	\caption{Configurations Structure of \autoref{ex:proc} (Continued)}
	\label{fig:ex_mem5}
\end{figure}

\begin{example}
	\label{ex:proc}
	Consider the following transitions:
	\begin{align*}
		\emptymem \rhd (a.b\mid c.\out{a}) & \congru (\fork . \emptymem \rhd a.b)\mid(\fork . \emptymem \rhd c.\out{a})                                                                  \\
		                                   & \fwlts{1}{c} (\fork . \emptymem \rhd a.b)\mid(\mem{1,c,0}. \fork . \emptymem \rhd \out{a})                                                  \\
		                                   & \fwlts{2}{\tau} (\mem{2, a, 0} . \fork . \emptymem \rhd b) \mid (\mem{2, \out{a}, 0} . \mem{1,c,0} . \fork .\emptymem \rhd 0)               \\
		                                   & \fwlts{3}{b} (\mem{3, b, 0} . \mem{2, a, 0} . \fork . \emptymem \rhd 0) \mid (\mem{2, \out{a}, 0} . \mem{1,c,0} . \fork . \emptymem \rhd 0)
	\end{align*}

	Now, observe that:
	\begin{align*}
		  & \encm{(\mem{3, b, 0} . \mem{2, a, 0} . \fork . \emptymem \rhd 0) \mid (\mem{2, \out{a}, 0} . \mem{1,c,0} . \fork . \emptymem \rhd 0)}    \\
		= & \encm{\mem{3, b, 0} . \mem{2, a, 0} . \fork . \emptymem \rhd 0} \mid \encm{\mem{2, \out{a}, 0} . \mem{1,c,0} . \fork . \emptymem \rhd 0} \\
		= & \encm{\mem{3, b, 0} . \mem{2, a, 0} . \fork . \emptymem} \mid \encm{\mem{2, \out{a}, 0} . \mem{1,c,0} . \fork . \emptymem}
	\end{align*}
	If we let (with the obvious labeling functions \(\labl(x) = x\) for \(x \in \{a, \out{a}, b, c\}\)):
	\begin{align*}
		\encm{\mem{3, b, 0}.\mem{2, a, 0}.\fork.\emptymem}       & = (\{a, b\}, \{ \emptyset, \{a\}, \{b,a\}\}, \{a \mapsto 2, b \mapsto 3
		\}                                                                                                                                                    \\
		\encm{\mem{2, \out{a}, 0}.\mem{1, c, 0}.\fork.\emptymem} & = (\{c, \out{a}\}, \{ \emptyset, \{c\}, \{c,\out{a}\} \}, \{c \mapsto 1, \out{a} \mapsto 2
		\})
	\end{align*}
	Then we respectively get the structures of Figures~\ref{fig:ex_mem1} and~\ref{fig:ex_mem2}.

	The product of those two configurations results in the following set of events (where we keep naming the events after their label):

	\begin{multicols}{4}
		\begin{subequations}
			\renewcommand{\theequation}{ev.\ \arabic{equation}}
			\noindent \begin{align}
				(a, \star) \label{bad4} \\
				(b, \star)
			\end{align}
			\begin{align}
				(\star, c) \\
				(\star, \out{a}) \label{bad5}
			\end{align}
			\begin{align}
				(a, c) \label{bad1} \\
				(a, \out{a})
			\end{align}
			\begin{align}
				(b, c) \label{bad2} \\
				(b, \out{a}) \label{bad3}
			\end{align}
		\end{subequations}
	\end{multicols}
	When doing the parallel composition, the relabeling labels with \(\bot\) events \ref{bad1}, \ref{bad2} and \ref{bad3}, since their event identifier do not match, as well as \ref{bad4} and \ref{bad5}, since they have the same identifiers %
	but occur unsynchronized.
	Hence, we obtain the \(\iconf\)-structure of \autoref{fig:ex_mem3},
	with \(\iden(\star, c) = 1\), \(\iden(a, \out{a}) = 2\) and \(\iden(b, \star) = 3\).

	Observe the encoding of the origin of this process, \(a.b\mid c.\out{a}\), in \autoref{fig:ex_mem5}: the underlying configuration of the encoding of the memory previously exposed is just a particular path in that configuration structure.
	This remark is made formal in \autoref{lem:mem_enc_in_origin}.
\end{example}

\begin{example}
	\label{ex:proc2}
	Similarly, we can encode the execution
	\[\emptymem\rhd (a.b\mid c) \fwlts{1}{c} \fwlts{2}{a} \fwlts{3}{b} (\mem{3, b, 0} . \mem{2, a, 0} . \fork . \emptymem \rhd 0) \mid (\mem{1,c,0} . \fork . \emptymem \rhd 0)\]
	and obtain the structure of \autoref{fig:ex_mem5}, with \(\iden(c) = 1, \iden(a) = 2\) and \(\iden(b) = 3\).
\end{example}

For the rest of this subsection, we assume given a coherent reversible process \(R\) whose origin is written \(\orig{R}\). %
As expected the underlying configuration of %
\(\encm{R}\) %
is included in \(\enc{\orig{R}}\) (this will be a direct consequence of \autoref{lem:enc_mem_2}): %

\begin{lemma}
	\label{lem:mem_enc_in_origin}
	\(\functoric(\encm{R})\subseteq \enc{\orig{R}}\)\footnote{Formally, \(\conf_1 = (E_1,C_1,\labl_1)\subseteq\conf_2= (E_2,C_2,\labl_2)\) iff \(E_1\subseteq E_2\), \(C_1\subseteq C_2\) and \(\labl_1\subseteq \labl_2\).}.
\end{lemma}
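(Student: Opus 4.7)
The plan is to proceed by induction on the length $n$ of a forward derivation $\emptymem \rhd \orig{R} \tflts R$, which exists because $R$ is assumed coherent and has a well-defined origin. The induction hypothesis is $\functoric(\encm{R'}) \subseteq \enc{\orig{R'}}$ for every intermediate $R'$; since $\orig{R'} = \orig{R}$ is preserved along the derivation, this yields the desired inclusion at the end.

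For the base case $n = 0$, $R$ reduces via $\congru$ to $\emptymem \rhd \orig{R}$, possibly followed by distributions of memory $m \rhd (P \mid Q) \congru (\fork . m \rhd P) \mid (\fork . m \rhd Q)$. Every memory stack in $R$ then has shape $\fork^k . \emptymem$, so by the clauses of Definition~\ref{def:encode_m} we have $\encm{R} = \iconfzero$, and trivially $\functoric(\encm{R}) = \confzero \subseteq \enc{\orig{R}}$.

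For the inductive step, a case analysis on the rule used in $R' \fbwlts{i}{\alpha} R$ (Figure~\ref{ltsrules}) is needed. For backward transitions (act.$_*$, syn.$_*$), $\encm{R}$ strictly shrinks compared to $\encm{R'}$ (one event, or a synchronized pair fused into a single $\tau$-event, is removed along with the configurations containing it), so the inclusion is preserved by the induction hypothesis. For forward transitions (act., syn.), $\encm{R}$ augments $\encm{R'}$ by one fresh event $e$ (or a matched pair that the parallel-composition clause of Definition~\ref{icat-op-def} fuses into a single $\tau$-labelled event carrying the common identifier $i$) placed on top of the current maximal configuration. One then exhibits $e$ and the new configurations $x \cup \{e\}$ inside $\enc{\orig{R}}$ by descending along the syntactic structure of $\orig{R}$ and matching the added event to the one produced by the corresponding inductive clause of Definitions~\ref{def:encoding-CCS-cf} and~\ref{cat-op-def}. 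The congruence, par., and res. rules do not alter the encoding beyond what is already handled by the parallel and restriction operations, which are compatible on both sides.

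The hard part will be making this event-matching precise: $e$ is only specified up to freshness by Definition~\ref{def:encode_m}, whereas $\enc{\orig{R}}$ has fixed events generated by the categorical operations of Definition~\ref{cat-op-def}. The cleanest route is to strengthen the induction hypothesis so that $\encm{R}$ coincides with a specific sub-$\iconf$-structure of $\enc{\orig{R}}$ rather than merely embedding into it, using $\iden$ as a canonical handle and exploiting the unique-identifier property of reachable threads (Corollary~\ref{cor:unique_id}) to guarantee that the matching is well-defined and consistent with synchronizations and forks. This is presumably the content of the forthcoming Lemma~\ref{lem:enc_mem_2} referenced in the excerpt, from which the present inclusion then follows immediately by applying the forgetful functor $\functoric$.
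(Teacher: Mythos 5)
Your proposal is correct and matches the paper's route: the paper proves this lemma in one line as a consequence of Lemma~\ref{lem:enc_mem_2} (\(\functoric(\encm{R}) \iso x\downarrow\) for \(\enc{R} = (\enc{\orig{R}}, x)\)), which is exactly where your final paragraph lands, and the induction you sketch is essentially how the paper proves that supporting lemma. The only simplification the paper makes that you miss is to invoke the existence of a \emph{forward-only} trace \(\orig{R} \tflts R\) for reachable processes, which renders your backward-transition cases unnecessary.
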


The following lemma states that all memory events in \(R\) %
are either causally linked or concurrent.
Therefore their encoding results in partially ordered set (poset) with one maximal element (\autoref{def:causality}), linked by the subset relation.
This is illustrated by %
Examples~\ref{ex:proc} and \ref{ex:proc2}, and proved by induction on the RCCS process.

\begin{lemma}
	\label{lem:prefix_enc}
	\(\encm{R}\) is a poset with one maximal configuration.%
\end{lemma}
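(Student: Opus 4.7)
The plan is to proceed by structural induction on $R$, following the clauses of Definition~\ref{def:encode_m}; the goal at each step is to show that the configurations of $\encm{R}$, ordered by inclusion, form a poset with a unique $\subseteq$-maximal element.

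The simple cases are immediate: $\encm{\emptymem} = \confzero$ has $\emptyset$ as its only configuration; the clauses $\encm{\fork.m} = \encm{m}$ and $\encm{R \bs a} = \encm{R}$ transport the property directly from the induction hypothesis; and for $\encm{\mem{i,\alpha,P}.m}$, the inductive clause adds the fresh event $e$ only on top of the unique maximal configuration $x$ of $\encm{m}$, giving the new unique maximum $x \cup \{e\}$. In all three of these cases the structure is in fact a chain.

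The delicate case is $\encm{R_1 \mid R_2} = \encm{R_1} \mid \encm{R_2}$. By induction, both $\encm{R_1}$ and $\encm{R_2}$ are posets of configurations with unique maxima $x_1$ and $x_2$. Unfolding the parallel composition of Definition~\ref{icat-op-def}, any configuration $y$ of $\encm{R_1} \mid \encm{R_2}$ arises as a subset of $E_1 \times_{\star} E_2$ whose projections $\gamma_i(y)$ lie in $C_i$ and whose events all survive the $\bot$-relabeling. The crucial ingredient is coherence of $R$ (Definition~\ref{def:coh}), preserved under reduction by Lemma~\ref{lem:reach-coh}: coherence ensures that each identifier shared between the memories of $R_1$ and $R_2$ is shared by exactly one event on each side, corresponding to a legitimate synchronization or fork. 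Consequently the identifier-matching imposed by the $\bot$-relabeling is functional, and one obtains a canonical candidate maximum by pairing the events of $x_1$ and $x_2$ with identical identifiers and leaving the remaining events single-sided; call this configuration $\hat y$.

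The main obstacle will be to verify uniqueness precisely in this composition step. I would argue by contradiction: any configuration strictly larger than, or incomparable with, $\hat y$ must project via $\gamma_1, \gamma_2$ to configurations of $\encm{R_i}$ that strictly extend $x_1$ or $x_2$ (impossible by the induction hypothesis), or must include an event violating the identifier bookkeeping, which would have been relabeled to $\bot$ and hence excluded by the restriction to $F$. Combined with the fact that inclusion on $\encm{R_1 \mid R_2}$ projects to inclusion on each component, this shows that $\hat y$ is the unique $\subseteq$-maximum of $\encm{R_1 \mid R_2}$ and that the inclusion order on $C$ is inherited from the two component posets, completing the induction.
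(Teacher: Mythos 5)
Your proposal is correct and follows essentially the same route as the paper: structural induction on $R$ with the memory-stack and restriction cases immediate, and the parallel composition resolved by observing that the identifier-driven $\bot$-relabeling and restriction in Definition~\ref{icat-op-def} eliminate all but one way of combining the unique maxima of $\encm{R_1}$ and $\encm{R_2}$. The only cosmetic difference is that you exhibit the surviving maximal configuration $\hat y$ directly (justifying the forced pairing via coherence, i.e.\ the content of Lemma~\ref{lem:id_unique}), whereas the paper argues that the restriction removes all but one of the several maximal configurations of the product.
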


Another way to encode %
a reachable memory in a configuration structure~\cite{Aubert2016jlamp} is to encode %
\(R\) %
as a configuration, called the \emph{address} of \(R\), inside %
\(\enc{\orig{R}}\)\footnote{By abuse of notation, \(\enc{\cdot}\) denotes two different encodings, on CCS and RCCS processes.%
}:
\begin{align*}
	\enc{R} = (\enc{\orig{R}}, x) \text{ where }x \text{ is a configuration in }\enc{\orig{R}}\text{.}
\end{align*}
\begin{definition}[Generation]
	\label{def:xdownarrow}
	Given \((E,C,\labl)\) and \(x \in C\), %
	the \emph{configuration structure generated by \(x\)} is \(x\downarrow = (x, C_x, \labl \restr{x})\), where \(C_x = \{ y \setst y\in C, y\subseteq x\}\)%
	\footnote{For the reader familiar with event structures, a configuration \(x\) defines an event structure \((x, \leqslant_x, \labl)\) for \(\leqslant_x\) the causality in \(x\).
		The construction here mirrors the transformation from an event structures to a configuration structure~\cite{Winskel1995}.}.
\end{definition}

The encoding of the memory of \(R\), \(\encm{R}\), is the configuration structure generated by the configuration \(x\).
In other words, the memory of \(R\) is \enquote{everything under} the address of \(R\) in \(\enc{\orig{R}}\).
We state this formally in the following lemma:

\begin{lemma}
	\label{lem:enc_mem_2}
	For %
	\(\enc{R} = (\enc{\orig{R}}, x)\), %
	\(\functoric(\encm{R}) \iso x \downarrow\).
\end{lemma}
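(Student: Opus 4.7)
The plan is to proceed by induction on the length $n$ of a derivation witnessing reachability, $\emptymem \rhd \orig{R} \fbwlts{i_1}{\alpha_1} \cdots \fbwlts{i_n}{\alpha_n} R$, building along the way an explicit bijection between the events of $\functoric(\encm{R})$ and those of $x\downarrow$. The base case ($n=0$) is immediate: $R = \emptymem \rhd \orig{R}$ gives $\encm{R} = \iconfzero$, while the address is $x = \emptyset$ and $x\downarrow$ is also the empty configuration structure.

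For the inductive step, I consider a derivation ending in $R \fbwlts{i}{\alpha} R'$ and case-split on the last transition rule. Since backward transitions strip from the memory exactly what the corresponding forward rule would add, I concentrate on forward moves. For \emph{act.}, a single memory event $\mem{i,\alpha,Q}$ is pushed onto one thread's stack; by \autoref{lem:prefix_enc}, $\encm{R}$ has a unique maximal configuration, above which $\encm{\cdot}$ appends exactly one new event with identifier $i$ and label $\alpha$. This mirrors the addition of one fresh event to the address $x$ in $\enc{\orig{R}}$, so the inductive isomorphism extends by matching the two new events. For \emph{syn.}, two memory events sharing the identifier $i$ and carrying complementary labels are pushed on either side of a parallel composition: the parallel composition of $\iconf$-structures merges them via the \emph{Valid Synchronisations} clause of the relabeling in \autoref{icat-op-def}, producing the single $\tau$-event that also appears as the new event in $x'$. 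The rules \emph{par.} and \emph{res.} leave the memory structure untouched; the only structural congruence that matters is the distribution of memory rule, which introduces $\fork$ markers that $\encm{\cdot}$ discards, the two resulting copies of the underlying memory being then re-collapsed by the \emph{Valid Forks} clause into a single isomorphic copy.

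The main obstacle will be the \emph{syn.} case, together with checking that the parallel composition on identified configuration structures faithfully reflects the parallel composition underlying $\enc{\orig{R}}$. Specifically, I must verify that the events relabeled to $\bot$ (and thus removed by the restriction in \autoref{icat-op-def}) correspond exactly to the events filtered out in the parallel composition defining $\enc{\orig{R}}$, and that events with matching identifiers on both sides are paired as synchronization or fork in agreement with the corresponding event of $\enc{\orig{R}}$. Here coherence of $R$, preserved under transitions by \autoref{lem:reach-coh}, together with \autoref{cor:unique_id}, is crucial: they ensure that whenever the same identifier appears on two sides of a parallel composition it genuinely originates from a synchronization or a fork, so the identifier-matching logic of the $\iconf$-structure parallel composition is well-defined and coincides with the structural behaviour recorded in $x\downarrow \subseteq \enc{\orig{R}}$.
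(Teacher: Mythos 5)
Your overall strategy coincides with the paper's: both proceed by induction on a trace from \(\emptymem\rhd\orig{R}\) to \(R\), extend an explicit event bijection at each step, and isolate the same two critical cases (a single \emph{act.} prefix versus a \emph{syn.} step handled by the \emph{Valid Synchronisations} clause of \autoref{icat-op-def}). Two remarks on the set-up: the paper first normalizes to a \emph{forward-only} trace (reachability via \cite[Lemma 1]{Aubert2016jlamp}) rather than arguing that backward moves "undo" forward ones, and it obtains the new target event \(e\) in \(x_{j+1}=x_j\cup\{e\}\) from the operational correspondence between \(R_j\) and \(\enc{R_j}\) (\cite[Lemma 6]{Aubert2016jlamp}), a citation your plan implicitly relies on but does not name.

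The genuine gap is that you only construct a label-preserving bijection on events and never argue that it is an isomorphism of configuration structures, i.e.\ that it carries the configuration family of \(\functoric(\encm{R})\) onto that of \(x\downarrow\). Since both structures are generated by a single maximal configuration (\autoref{lem:prefix_enc} on one side, \autoref{def:xdownarrow} on the other), this amounts to showing that the bijection preserves \emph{and reflects} the causal orders \(\leqslant_{\xmax}\) and \(\leqslant_{x}\); "matching the two new events" in the inductive step is not enough, because the newly added event's set of causes in \(\encm{R'}\) (everything below it in the thread's stack, propagated through forks and synchronisations) must be shown to map exactly onto its causes in \(\enc{\orig{R}}\) restricted to \(x'\). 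This is precisely where the paper spends its effort: it introduces the order \(\leqslant_R\) on memory identifiers and the order \(\leqslant_\theta\) on transitions, proves \autoref{lem:causality_mem} (\(i_1\leqslant_R i_2 \iff e_1\leqslant_{\xmax}e_2\)) and \autoref{lem:te} (\(t_1\leqslant_\theta t_2 \iff \te{}(t_1)\leqslant_x\te{}(t_2)\)), and composes the three bijections \(\me{}=\te{}\circ\mt{}\) to transport causality from the memory encoding to the address. Without some counterpart of these two lemmas your induction establishes only a bijection between event sets, which does not suffice for \(\functoric(\encm{R})\iso x\downarrow\). Your emphasis on coherence and \autoref{cor:unique_id} for the identifier-matching in the \emph{syn.}/fork cases is correct but addresses well-definedness of the parallel composition, not this order-correspondence.
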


We can now make the intuitions of \autoref{fig:schema} formal, letting \(R\equiv (\prod_{i} m_i\rhd P_i)\bs \namelist{a}\) and \(\orig{R}\) be its origin.
Informally speaking, the memories of \(R\), i.e. the \(m_i\), %
is the \enquote{past} of \(R\), and the process \((\prod_{i} P_i)\bs \namelist{a}\) is its \enquote{future}: %
in the configuration structure \(\enc{\orig{R}}\), they are respectively represented by \(\functoric(\encm{R})\) and \(\enc{(\prod_{i} P_i)\bs \namelist{a}}\).
Consider \autoref{ex:proc}: the past of the process is the structure shown in \autoref{fig:ex_mem3}, which corresponds to the gray part in \autoref{fig:ex_mem4}.
The future of this process is empty, as all actions have been consumed.

However, we cannot recover the origin process from the encoding of the past and future of a reversible process.
There is a loss of information that occurs for the synchronisation events when encoding memories into identified configuration structures: we label \(\tau\) such events, whereas in the memories of a RCCS process the synchronised events keep their input (or output) label.

\subsection{(Hereditary) History-Preserving Bisimulations on CCS}
\label{sec:equiv_rccs}
We adapt the (hereditary) history-preserving bisimulations of \autoref{sec:hpbs} %
to CCS, making %
the bijections %
becomes isomorphisms between memory encodings.

Below, %
we let \(P_1\) and \(P_2\) be two CCS processes (possibly with auto-concurrency, \autoref{def:autocrccs}), and \(\enc{P_i} = (E_i,C_i,\labl_i)\), for \(i = 1, 2\).

\begin{definition}[\HPB and \HHPB on CCS]
	\label{def:hpb_rccs}
	A relation \(\rel\subseteq \rproc \times \rproc \times (E_1 \rightharpoonup E_2)\) such that \((\emptymem\rhd P_1,\emptymem\rhd P_2,\emptymem)\in\rel\) and
	if \((R_1, R_2, f) \in \rel\) then \(f\) is an isomorphism between \(\functoric(\encm{R_1})\) and \(\functoric(\encm{R_2})\) and (\ref{hpb_rccs1}) and (\ref{hpb_rccs2}) (resp.\ (\ref{hpb_rccs1}--\ref{hpb_rccs4})) hold is called a \emph{history-}(resp.\ \emph{hereditary history-}) \emph{preserving bisimulation between \(P_1\) and \(P_2\)}.
	\begin{align}
		\forall S_1, R_1\fwlts{i}{\alpha}S_1 \Rightarrow
		\exists S_2, g, R_2\fwlts{j}{\alpha}S_2, g\restr{\functoric(\encm{S_1})} = f, (S_1, S_2, g)\in\rel \label{hpb_rccs1}  \\
		\forall S_2, R_2\fwlts{i}{\alpha}S_2 \Rightarrow
		\exists S_1, g, R_1\fwlts{j}{\alpha}S_1, g\restr{\functoric(\encm{S_1})} = f , (S_1, S_2, g)\in\rel \label{hpb_rccs2} \\
		\forall S_1, R_1\bwlts{i}{\alpha}S_1 \Rightarrow
		\exists S_2, f, R_2\bwlts{j}{\alpha}S_2, g = f\restr{\functoric(\encm{S_1})}, (S_1, S_2, g)\in\rel \label{hpb_rccs3}  \\
		\forall S_2, R_2\bwlts{i}{\alpha}S_2 \Rightarrow
		\exists S_1, g, R_1\bwlts{j}{\alpha}S_1, g = f\restr{\functoric(\encm{S_1})}, (S_1, S_2, g)\in\rel \label{hpb_rccs4}
	\end{align}
\end{definition}

\begin{definition}[\wfHPB on CCS]
	\label{def:whpb_rccs}
	A \emph{weak-function history-preserving bisimulation between \(P_1\) and \(P_2\)} is a relation \(\rel\subseteq \rproc \times \rproc \times (E_1 \rightharpoonup E_2)\) such that
	\((\emptymem\rhd P_1,\emptymem\rhd P_2,\emptymem) \in \rel\) and
	if \((R_1, R_2,f) \in \rel\), then \(f\) is a isomorphism between \(\functoric(\encm{R_1})\) and \(\functoric(\encm{R_2})\) and

	\begin{align*}
		\forall S_1, R_1\fwlts{i}{\alpha}S_1 \Rightarrow
		\exists S_2, g, R_2\fwlts{j}{\alpha}S_2, (S_1,S_2, g)\in\rel \\
		\forall S_2, R_2\fwlts{i}{\alpha}S_2 \Rightarrow
		\exists S_1, g, R_1\fwlts{j}{\alpha}S_1, (S_1, S_2, g)\in\rel
	\end{align*}
\end{definition}

Of course, \wfHHPB on CCS is defined similarly.

Note that the definitions above reflect definitions \ref{def:hpb} and \ref{def:whpb}: the condition \((\emptymem\rhd P_1,\emptymem\rhd P_2,\emptymem) \in \rel\) is intuitively the counterpart to the condition that \((\emptyset,\emptyset,\emptyset)\) %
have to be included in the relation on configuration structures. %
Also, \(f\) shares similarity with the label- and order-preserving bijection. %
Finally, note that, to keep the definitions concise, we wrote \(\rproc\), but one needs only to consider the reachable processes from \(\emptymem\rhd P_1\) and \(\emptymem\rhd P_2\).

\begin{theorem}[Main result]
	\label{th:hpb_rccs}
	\(P_1\) and \(P_2\) are \HHPB (resp.\ \wfHHPB, \HPB, \wfHPB) iff \(\enc{P_1}\) and \(\enc{P_2}\) are.
\end{theorem}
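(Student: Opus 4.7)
The plan is to exploit the bridge provided by Lemma \ref{lem:enc_mem_2}: for a reachable reversible process $R$ with origin $\orig{R}$ and address $x \in \enc{\orig{R}}$, we have $\functoric(\encm{R}) \iso x\downarrow$. This establishes a one-to-one correspondence between reachable reversible processes $R$ with $\orig{R} = P_i$ and configurations of $\enc{P_i}$, letting us translate bisimulations in both directions.

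First I would formalize the correspondence between transitions. A forward transition $R_i \fwlts{j}{\alpha} S_i$ (with $R_i$ at address $x_i$) corresponds precisely to a transition $x_i \redl{e} y_i$ in $\enc{\orig{R_i}}$ with $\labl(e) = \alpha$, and symmetrically for backward transitions $R_i \bwlts{j}{\alpha} S_i \leftrightarrow x_i \revredl{e} y_i$. This is essentially a compilation of how the RCCS LTS of \autoref{ltsrules} traverses the configurations of $\enc{\orig{R_i}}$, together with the definition of $\encm{\cdot}$ which adds exactly the new event $e$ to the maximal configuration when a forward step fires (and removes the maximal event on backward). Second, I would show that a partial map $f : E_1 \rightharpoonup E_2$ restricts to a label- and order-preserving bijection $x_1 \to x_2$ if and only if it induces an isomorphism of configuration structures $x_1\downarrow \to x_2\downarrow$; this holds because $\leqslant_{x_i}$ and the configurations of $x_i\downarrow$ are determined by each other (\autoref{def:causality}, \autoref{def:xdownarrow}). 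Composing with the isomorphism $\functoric(\encm{R_i}) \iso x_i\downarrow$ then gives the equivalence between the two notions of ``matching bijection'' used in Definitions \ref{def:hpb} and \ref{def:hpb_rccs}.

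Armed with these two bridges, the proof of each of the four equivalences (\HPB, \wfHPB, \HHPB, \wfHHPB) proceeds by direct translation. Given a bisimulation $\rel$ on CCS (Definition \ref{def:hpb_rccs}), define $\rel'$ on configuration structures by $(x_1, x_2, f) \in \rel'$ iff $(R_1, R_2, f') \in \rel$ where $R_i$ is the reversible process with address $x_i$ and $f'$ is the isomorphism corresponding to $f$ under the second bridge. The bisimulation clauses transfer directly: clauses \ref{hpb1}--\ref{hpb2} correspond to \ref{hpb_rccs1}--\ref{hpb_rccs2}, and \ref{hpb3}--\ref{hpb4} to \ref{hpb_rccs3}--\ref{hpb_rccs4}, where the restriction condition $g\restr{x_1} = f$ (respectively $g = f\restr{x_1}$) translates to $g\restr{\functoric(\encm{S_1})} = f$ (respectively $g = f\restr{\functoric(\encm{S_1})}$) because the underlying event sets match under $\functoric(\encm{S_i}) \iso x_i'\downarrow$. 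The reverse construction is symmetric, and the weak-function cases follow by dropping the restriction clauses on both sides simultaneously.

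The main obstacle will be the careful verification that the translation of partial functions is coherent across transitions: when a forward step extends $x_1$ to $y_1$ and a matching $g$ on the configuration side extends $f$, I must check that the corresponding process-level isomorphism also extends in the required way under the $\functoric(\encm{\cdot}) \iso x\downarrow$ translation, and that the identifiers attached by $\iden$ in $\encm{\cdot}$ do not interfere (they are forgotten by $\functoric$, but one must confirm the underlying configuration-structure isomorphism is well-defined independently of identifier choice). A second subtlety flagged in \autoref{sec:enc_mem} is that synchronization memory events keep their $\lambda$ or $\out\lambda$ label in the RCCS memory while the matching event in $\enc{\orig{R}}$ is labeled $\tau$; however, the $\iconf$-structure parallel composition in \autoref{icat-op-def} (Valid Synchronisations) relabels such events to $\tau$ as well, so after applying $\functoric$ the labels agree on both sides and the bijection/isomorphism correspondence goes through.
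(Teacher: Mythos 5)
Your proposal follows essentially the same route as the paper: it uses Lemma~\ref{lem:enc_mem_2} as the bridge between $\functoric(\encm{R})$ and $x\downarrow$, the operational correspondence between RCCS transitions and configuration-structure transitions, and the observation that an isomorphism of generated configuration structures is the same thing as a label- and order-preserving bijection, then translates the bisimulation clauses in both directions by the same direct construction of the candidate relation. The subtleties you flag (identifiers forgotten by $\functoric$, $\tau$-relabeling of synchronizations) are consistent with how the paper handles them, so the proposal is correct.
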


\section{Conclusion}
\label{sec:related_work}
\label{sec:conclusion}
In this paper, we recalled how the previous attempt to characterize syntactically \HHPB gave partial result (\autoref{thm:previousresult}).
Then, we defined a series of bisimulations on CCS processes that corresponds to \HPB, \HHPB, and their \enquote{weak function} variants, on configuration structures. We managed therefore to define an equivalence on CCS which distinguishes for instance \(a.a \mid b\) from \(a \mid a \mid b\), whose encodings are not \HHPB.
We would like to conclude with three observations.

Fist, we should stress that our relations are defined in terms of CCS processes: on the surface, this paper offers a new result on \emph{non-reversible} CCS, using tools stemming from the study of reversible computation.
We believe this is an interesting contribution, that witnesses the relevance of studying concurrent reversible computation.

Our paper also introduces a natural technical tool, identified configuration structure, to encode information on memory events.
It should be noted that a memory event is made of \emph{three} elements, \(\mem{i, \alpha, Q}\), and that configuration structures encode names \(\alpha\), and \(\iconf\)-structures furthermore encode identifiers \(i\).
We could take a step further and also enclose information about the residual process \(Q\) in a sum.
Representing the \enquote{whole memory} in denotational model could lead to interesting new bisimulation on processes, but we leave this for a future possible extension.

Other CCS bisimulations such as the \emph{pomset} bisimulations~\cite{Boudol1987} or the \emph{localities} bisimulations~\cite{Boudol1992} are known to be different from the history-preserving ones.
However, these bisimulations add some information on auto-concurrent events, that can be used to distinguish them.
A possible direction of future work is then to adapt these bisimulations to the reversible setting to maybe capture \HHPB.

\bibliographystyle{splncs04}

\begin{thebibliography}{}
\providecommand{\url}[1]{\texttt{#1}}
\providecommand{\urlprefix}{URL }
\providecommand{\doi}[1]{https://doi.org/#1}

\end{thebibliography}


\begin{thebibliography}{10}
	\providecommand{\url}[1]{\texttt{#1}}
	\providecommand{\urlprefix}{URL }
	\providecommand{\doi}[1]{https://doi.org/#1}

	\bibitem{Aubert2016jlamp}
	Aubert, C., Cristescu, I.: Contextual equivalences in configuration structures
	and reversibility. J.\ Log.\ Algebr.\ Methods Program. \textbf{86}(1),
	77--106 (2017). \doi{10.1016/j.jlamp.2016.08.004}

	\bibitem{Baldan2010}
	Baldan, P., Crafa, S.: A logic for true concurrency. In: Gastin, P.,
	Laroussinie, F. (eds.) CONCUR. LNCS, vol.~6269, pp. 147--161. Springer
	(2010). \doi{10.1007/978-3-642-15375-4_11}

	\bibitem{Bednarczyk1991}
	Bednarczyk, M.A.: Hereditary history preserving bisimulations or what is the
	power of the future perfect in program logics. Tech. rep., Instytut Podstaw
	Informatyki PAN filia w Gdańsku (1991),
	\url{http://www.ipipan.gda.pl/~marek/papers/historie.ps.gz}

	\bibitem{Boudol1992}
	Boudol, G.: {Asynchrony and the Pi-calculus}. Research Report RR-1702, INRIA
	(1992), \url{https://hal.inria.fr/inria-00076939}

	\bibitem{Boudol1987}
	Boudol, G., Castellani, I.: On the semantics of concurrency: Partial orders and
	transition systems. In: Ehrig, H., Kowalski, R.A., Levi, G., Montanari, U.
	(eds.) {TAPSOFT'87}. LNCS, vol.~249, pp. 123--137. Springer (1987).
	\doi{10.1007/3-540-17660-8_52}

	\bibitem{Danos2004}
	Danos, V., Krivine, J.: Reversible communicating systems. In: Gardner, P.,
	Yoshida, N. (eds.) CONCUR. LNCS, vol.~3170, pp. 292--307. Springer (2004).
	\doi{10.1007/978-3-540-28644-8_19}

	\bibitem{Danos2005}
	Danos, V., Krivine, J.: Transactions in {RCCS}. In: Abadi, M., de~Alfaro, L.
	(eds.) CONCUR. LNCS, vol.~3653, pp. 398--412. Springer (2005).
	\doi{10.1007/11539452_31}

	\bibitem{Darondeau1990a}
	Darondeau, P., Degano, P.: Causal trees: Interleaving + causality. In:
	Guessarian, I. (ed.) Semantics of Systems of Concurrent Processes, {LITP}
	Spring School on Theoretical Computer Science, La Roche Posay, France, April
	23-27, 1990, Proceedings. LNCS, vol.~469, pp. 239--255. Springer (1990).
	\doi{10.1007/3-540-53479-2_10}

	\bibitem{Glabbeek2001}
	van Glabbeek, R.J., Goltz, U.: Refinement of actions and equivalence notions
	for concurrent systems. Acta Inform. \textbf{37}(4/5), 229--327 (2001).
	\doi{10.1007/s002360000041}

	\bibitem{Phillips2011}
	Phillips, I., Ulidowski, I.: A logic with reverse modalities for
	history-preserving bisimulations. In: Luttik, B., Valencia, F. (eds.)
	EXPRESS. EPTCS, vol.~64, pp. 104--118 (2011). \doi{10.4204/EPTCS.64.8}

	\bibitem{Rabinovich1988}
	Rabinovich, A., Trakhtenbrot, B.A.: Behavior structures and nets. Fund.\
	Inform. \textbf{11}(4), 357--404 (1988)

	\bibitem{Sassone1996}
	Sassone, V., Nielsen, M., Winskel, G.: Models for concurrency: Towards a
	classification. Theoret.\ Comput.\ Sci. \textbf{170}(1-2), 297--348 (1996).
	\doi{10.1016/S0304-3975(96)80710-9}

	\bibitem{Winskel1982}
	Winskel, G.: Event structure semantics for {CCS} and related languages. In:
	Nielsen, M., Schmidt, E.M. (eds.) ICALP. LNCS, vol.~140, pp. 561--576.
	Springer (1982). \doi{10.1007/BFb0012800}

	\bibitem{Winskel1986}
	Winskel, G.: Event structures. In: Brauer, W., Reisig, W., Rozenberg, G. (eds.)
	Petri Nets: Central Models and Their Properties, Advances in Petri Nets 1986,
	Part II, Proceedings of an Advanced Course, Bad Honnef, 8.-19. September
	1986. LNCS, vol.~255, pp. 325--392. Springer (1986).
	\doi{10.1007/3-540-17906-2_31}

	\bibitem{Winskel1995}
	Winskel, G., Nielsen, M.: Models for concurrency. In: Abramsky, S., Gabbay,
	D.M., Maibaum, T.S.E. (eds.) Semantic Modelling, Handbook of Logic in
	Computer Science, vol.~4, pp. 1--148. Oxford University Press (1995)

\end{thebibliography}
\renewcommand{\doi}[1]{\url{https://doi.org/#1}}

\clearpage

\appendix

\section{Appendix}
\label{sec:appendix}

The appendix is divided in three subsections.

\autoref{subsec:apA} gathers results about identified configuration structures, introduced in \autoref{sec:ics}.
They are not required to understand the rest of the technical development, but illustrates why we believe this notion is \enquote{solid}, and gives some insights on how to manipulate it.

The main purpose of \autoref{subsec:apB} is to prove \autoref{lem:enc_mem_2}, stated in \autoref{sec:enc_mem}.
This apparently intuitive result actually requires a heavy machinery to be proven: not only do we prove \autoref{lem:prefix_enc}, but we also state and prove some intermediate lemmas.

The immediate advantage of \autoref{lem:enc_mem_2} is that it makes the proof of \autoref{th:hpb_rccs}, in \autoref{subsec:apC}, almost straightforward.

Before doing so, we need to prove the only new result of \autoref{sec:prelim}:

\begin{proof}[\autoref{cor:unique_id}, page~\pageref{cor:unique_id}]
	Since \(m \rhd P\) is reachable, \(m \rhd P\) is coherent by \autoref{lem:reach-coh}, and hence \(m \coh \emptyset\) by \autoref{def:coh}.
	Looking at \autoref{def:coherence}, the only way to derive \(m \coh \emptyset\) is to start with em., and then to apply ev.\ to \enquote{stack} the memory events.
	If an identifier were to appear in two memory event in \(m\), we would not be able to use ev.\ to add the second memory event, since the side-condition would forbid it.
	\qed\end{proof}

\subsection{On the Robustness of Identified Configuration Structures}%
\label{subsec:apA}

This subsection gathers elementary results on the structures introduced in \autoref{sec:ics}.
These results, e.g., that the category of \(\iconf\)-structures of \autoref{def:cat_ics} is indeed a category (\autoref{lem:cat_ics}) or that the operations on \(\iconf\)-structures of \autoref{icat-op-def} are well-defined (\autoref{lem:preserve_ics}), were not stated in the body of the document, but are implicitely used.

\begin{lemma}
	\label{lem:cat_ics}
	Identified configuration structure and their morphisms (\autoref{def:cat_ics}, page~\pageref{def:cat_ics}) form a category .
\end{lemma}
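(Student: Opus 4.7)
The plan is to verify the category axioms by building everything component\-wise on top of the category $\catcs$ of configuration structures, which is already known to form a category. Recall that a morphism in $\catics$ is a triple $(f_E, f_C, f_{\iden})$ whose first two components form a $\catcs$-morphism and whose third component $f_{\iden} : \ids_1 \to \ids_2$ satisfies the identifier-preservation law $f_{\iden}(\iden_1(e)) = \iden_2(f_E(e))$ for every $e \in E_1$.

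First I would exhibit, for each $\iconf$-structure $\iconf = (E, C, \labl, \iden)$, the candidate identity morphism $(\id_E, \id_C, \id_{\ids})$. Its first two components form the identity in $\catcs$ on the underlying configuration structure, and the identifier-preservation law is immediate since $\id_{\ids}(\iden(e)) = \iden(e) = \iden(\id_E(e))$. Next I would define composition pointwise: given $f : \iconf_1 \to \iconf_2$ and $g : \iconf_2 \to \iconf_3$, set $g \circ f = (g_E \circ f_E,\, g_C \circ f_C,\, g_{\iden} \circ f_{\iden})$. The first two components yield a $\catcs$-morphism by composition in $\catcs$; for the identifier component, one chains the two hypotheses:
\[
(g_{\iden} \circ f_{\iden})(\iden_1(e)) = g_{\iden}(\iden_2(f_E(e))) = \iden_3(g_E(f_E(e))) = \iden_3((g_E \circ f_E)(e)).
\]

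It then remains to check the associativity and identity laws, which hold for each of the three components independently: for $(f_E, f_C)$ they are inherited from $\catcs$, and for $f_{\iden}$ they follow from associativity of function composition and the unit laws in the category of sets. Since equality of morphisms in $\catics$ is by definition componentwise equality, the laws for the triples follow immediately.

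The main \enquote{obstacle}---really just a bookkeeping point---is to make sure that composition is actually well-defined, i.e.\ that the triple produced by the pointwise composition is again a morphism of $\catics$; this is precisely the identifier-preservation check above, which uses the two hypotheses in sequence. Once this is observed, all remaining verifications are routine and the category structure follows.
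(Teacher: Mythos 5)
Your proposal is correct and follows essentially the same route as the paper: identity and associativity are inherited componentwise from \(\catcs\), with the only substantive check being that the identifier-preservation law is stable under composition. If anything, you are slightly more explicit than the paper, which leaves the well-definedness of composition implicit.
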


\begin{proof}
	\begin{description}
		\item[Identity] For every \(\iconf\)-structure \(\iconf = (E, C, \labl, \ids, \iden)\), \(\id_{\iconf} : \iconf \to \iconf\) is defined to be the identity on the underlying configuration structure \(\id : (E, C, \labl) \to (E, C, \labl)\) from \(\catcs\), that trivially preserves identifiers.
		      For any morphism \(f : \iconf_1 \to \iconf_2\), \(f \circ \id_{\iconf_1} = f = \id_{\iconf_2}\circ f\) is trivial.
		\item[Associativity] for \(f : \iconf_1 \to \iconf_2\), \(g : \iconf_2 \to \iconf_3\) and \(h : \iconf_3 \to \iconf_4\), \(h \circ (g \circ f) = (h \circ g) \circ f\) is inherited from the associativity in \(\catcs\), and since \(f\), \(g\) and \(h\) all preserves identifiers.
	\end{description}
	Hence \(\catics\) is a category.
	\qed
\end{proof}

\begin{lemma}
	\label{lem:product_ics}
	The product of \(\iconf\)-structures of \autoref{icat-op-def}, page~\pageref{icat-op-def}, is the product in \(\catics\).%
\end{lemma}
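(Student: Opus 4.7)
The plan is to establish the three requirements of a categorical product: (i) that $\iconf_1 \times \iconf_2$ is a well-defined object of $\catics$, (ii) that the projections $\gamma_i = (\pi_i, p_i)$ are morphisms in $\catics$, and (iii) that the universal property holds. The underlying configuration structure is by construction the product in $\catcs$ (as recalled from the cited references), so the bulk of the work reduces to bookkeeping at the identifier layer.

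For (i), I would verify collision-freeness. Let $x \in C$ and $e, e' \in x$ with $\iden(e) = \iden(e')$. Since no element of $E_1 \times_{\star} E_2$ has both coordinates equal to $\star$, at least one coordinate of $\iden(e)$ lies in $\ids_i$; say $p_1(\iden(e)) = \iden_1(\pi_1(e)) \neq \iden_{\star}$. Then $p_1(\iden(e)) = p_1(\iden(e'))$ forces $\pi_1(e') \neq \star$ and $\iden_1(\pi_1(e)) = \iden_1(\pi_1(e'))$; applying collision-freeness of $\iconf_1$ inside the configuration $\pi_1(x) \in C_1$ gives $\pi_1(e) = \pi_1(e')$, and the third product axiom of \autoref{cat-op-def} yields $e = e'$.

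For (ii), the $\catcs$-component of $\gamma_i$ is the projection of the underlying product in $\catcs$, and the identifier-component $p_i$ satisfies $p_i(\iden(e)) = \iden_i(\pi_i(e))$ exactly on the domain where $\pi_i$ is defined, which is the identifier-preservation condition of \autoref{def:cat_ics}. For (iii), suppose $\iconf$ is equipped with morphisms $h_i : \iconf \to \iconf_i$. Applying the forgetful functor $\functoric$ and invoking the universal property of the product in $\catcs$, one gets a unique $\catcs$-morphism $h_E$ with $\pi_i \circ h_E = \functoric(h_i)_E$, which concretely forces $h_E(e) = (h_{1,E}(e), h_{2,E}(e)) \in E_1 \times_{\star} E_2$. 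I would then define $h_{\iden}$ on the image of $\iden$ by $h_{\iden}(\iden(e)) = \iden(h_E(e))$, and verify by direct calculation that $p_i \circ h_{\iden} = h_{i,\iden}$ (well-definedness of $h_\iden$ on the image of $\iden$ uses collision-freeness of $\iconf$, since any two events with the same identifier lie together in some configuration and must be equal).

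The main obstacle I anticipate is the universal property for the identifier component: on elements of $\ids$ that lie outside the image of $\iden$, the morphism conditions place no constraint, so strict uniqueness of $h_{\iden}$ requires either implicitly working up to agreement on the image or assuming $\iden$ is surjective. A second delicate point is the interaction between the partial projections $\pi_i$ and the identifier tags $\iden_{\star}$: every case distinction between $\pi_i(e) = \star$ and $\pi_i(e) \neq \star$ must be tracked through the identifier side, which makes the calculations for (i) and (iii) longer than they appear, although each individual step remains routine.
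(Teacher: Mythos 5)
Your proposal follows essentially the same route as the paper's proof: check that the product object satisfies collision-freeness, check that the projections \((\pi_i, p_i)\) are identifier-preserving morphisms, and obtain the universal property by lifting the unique mediating morphism from the product in \(\catcs\) through the forgetful functor \(\functoric\) and equipping it with an identifier component. The uniqueness caveat you raise for \(h_{\iden}\) outside the image of \(\iden\) is a genuine subtlety that the paper's own proof also leaves implicit, so your treatment is if anything more careful.
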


\begin{proof}%
	First note that \(\iconf_1\times\iconf_2\) is a \(\iconf\)-structures as it is a configuration structure and from the definition of \(\iden\) every event in a configuration has a unique label.
	Secondly, it is easy to show that the projections are morphisms. Lastly to show that the structure \(\iconf_1\times\iconf_2\) has the universal property, we proceed in two steps:
	\begin{itemize}
		\item the underlying configuration structure is the product of the underlying configuration structures, by definition:
		      \begin{align*}
			      \functoric(\iconf_1\times\iconf_2) = \functoric(\iconf_1)\times\functoric(\iconf_2);
		      \end{align*}
		\item for any \(\iconf'\) which projects into \(\iconf_1\) and \(\iconf_2\), then \(\functoric(\iconf')\) projects into \(\functoric(\iconf_1)\) and \(\functoric(\iconf_2)\) and therefore there exists a unique morphism \(h:\functoric(\iconf')\to \functoric(\iconf_1\times\iconf_2)\). It is easy to show that since the projections preserve identifiers, then so does \(h\) which concludes our proof.
	\end{itemize}
	This lemma also follows from \cite[Proposition 85]{Winskel1995}.
	\qed
\end{proof}

\begin{lemma}
	\label{lem:preserve_ics}
	The operations of \autoref{icat-op-def} (product, relabeling, restriction and parallel composition), page~\pageref{icat-op-def}, preserve \(\iconf\)-structures.
\end{lemma}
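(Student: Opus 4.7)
The plan is to check each of the four operations separately, showing both that the resulting object is a configuration structure (which for product and parallel composition is already implicit from the preceding development) and that the identifier function satisfies the collision-freeness condition from \autoref{def:iconf}. Since parallel composition is by definition the composite of product, relabeling, and restriction, it suffices to handle those three primitives, and the parallel composition case then follows without extra work.

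I would start with the product $\iconf_1 \times \iconf_2$, which is the only nontrivial primitive. The underlying configuration structure is the categorical product in $\catcs$ by \autoref{lem:product_ics}, so the axioms of \autoref{def:conf_str} already hold. For collision-freeness, fix $x \in C$ and $e_1 \neq e_2$ in $x$. The definition of product ensures that if $\pi_i(e_1) = \pi_i(e_2) \neq \star$ for some $i \in \{1,2\}$, then $e_1 = e_2$, so distinct events must disagree on every component where both are defined. A short case analysis on which projections are $\star$ then handles collision-freeness: when both $\pi_i(e_1)$ and $\pi_i(e_2)$ are defined, the events are distinct elements of the configuration $\pi_i(x) \in C_i$, and collision-freeness in $\iconf_i$ yields $\iden_i(\pi_i(e_1)) \neq \iden_i(\pi_i(e_2))$, so the $i$-th component of $\iden$ already distinguishes them; when one event has $\pi_i = \star$ and the other does not, the $i$-th component of $\iden$ is $\iden_\star$ for one and lies in $\ids_i$ for the other, hence again distinct.

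The relabeling $r \circ \iconf_1 = (E_1, C_1, r, \iden_1)$ is immediate: the axioms of \autoref{def:conf_str} do not mention the labeling map, so $(E_1,C_1)$ remains a configuration structure under any labeling, and $\iden_1$ is unchanged so collision-freeness carries over verbatim. For the restriction $\iconf_1 \restr E$, the underlying restriction is a configuration structure by the standard argument (each axiom's witnesses can be taken inside $x \subseteq E$), and the identifier function is just $\iden_1 \restr E$, so collision-freeness specializes to configurations of $\iconf_1 \restr E$ (which are, in particular, configurations of $\iconf_1$).

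For parallel composition, the strategy is compositional: $\iconf_1 \mid \iconf_2 = (r \circ (\iconf_1 \times \iconf_2)) \restr F$, so by the three previous cases the result is a $\iconf$-structure. The main obstacle I anticipate is the bookkeeping in the product case, specifically making sure the collision-freeness argument does not silently rely on $e_1$ and $e_2$ having matching \enquote{defined/undefined} patterns for their projections; the case split above is precisely what handles this. None of the steps require inventing new machinery, but the product case is the place where the construction of $\iden$ via $\iden_\star$ must be used in earnest.
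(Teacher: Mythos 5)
Your proof is correct, and for three of the four operations it coincides with the paper's: relabeling and restriction are handled identically (nothing to check beyond the underlying configuration structure and the inherited identifier function), and for the product you actually supply more detail than the paper, which simply defers to \autoref{lem:product_ics}; your case split on which projections are $\star$, combined with the product's injectivity condition ($\pi_i(e_1)=\pi_i(e_2)\neq\star \Rightarrow e_1=e_2$) and collision-freeness of the factors, is exactly the argument that lemma leaves implicit. Where you genuinely diverge is parallel composition: you treat it compositionally as $(\relabl\circ(\iconf_1\times\iconf_2))\restr{F}$ and inherit collision-freeness from the three primitives, whereas the paper gives a direct case analysis on the projections of two events in a common configuration, invoking the $\bot$-relabeling and the restriction to $F$ to rule out collisions between a synchronisation (or fork) and a stray unsynchronized copy of one of its halves. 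Your shortcut is valid for \autoref{icat-op-def} as literally written, because there the identifier of the parallel composition is the pair-valued $\iden$ of the product restricted to $F$, and distinct pairs never collide. The paper's longer argument is the one you would need under the reading its own examples use (e.g.\ \autoref{fig:ex_mem3}, where $\iden(a,\out{a})=2$ rather than $(2,2)$), i.e.\ when the pair $(\iden_1(e_1),\iden_2(e_2))$ of a valid synchronisation is collapsed to the common value in $\ids_1\cap\ids_2$: under that identification, collision-freeness is no longer automatic from the product and is restored precisely by the events removed at the relabeling/restriction stage. So each approach buys something: yours is more economical and faithful to the formal definition; the paper's is robust to the collapsed-identifier convention it actually relies on later. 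If you adopt the compositional route, it is worth one sentence acknowledging that it depends on keeping the identifiers as pairs.
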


\begin{proof}%
	Let us note that (i) the product, relabeling, restriction, and parallel composition on configuration structures from \autoref{cat-op-def} preserve configuration structures and that (ii) any configuration structure endowed with a valid identifier function (i.e., such that no two events in the same configuration have the same identifier, cf. \ref{eq:collision}) is a valid \(\iconf\)-structure.

	For the product, it follows trivially from \autoref{lem:product_ics}.

	Relabeling does not change anything but the labels, so we have nothing to prove.

	The restriction only removes events in configurations and keeps the identifier function intact. Hence if the initial structure has a valid identifier function, then the identifier function of the new structure is a valid one by assumption.

	Let us now consider the parallel composition of two \(\iconf\)-structures, denoted \(\iconf_1 \mid \iconf_2 = (E, C, \labl, \iden)\). Proving that the identifier function is valid follows from a case analysis. %
	Given a configuration \(x \in C\) and two events \(e,e' \in x\), these are the possible cases:%
	\begin{itemize}
		\item \(\pi_2(e)= \star\) and \(\pi_2(e')= \star\).
		      In this case, looking at the definition of the product in \(\iconf\)-structures, \(\iden(e) = (\iden_1(\pi_1(e)), \iden_{\star})\) and \(\iden(e') = (\iden_1(\pi_1(e')), \iden_{\star})\).
		      If \(\iden(e) = \iden(e')\), then \(\iden_1(\pi_1(e)) = \iden_1(\pi_1(e'))\) in the configuration \(\pi_1(x)\) in \(\iconf_1\).
		      But that's a contradiction, since \(\pi_1(e)\) and \(\pi_1(e')\) are in the same configuration and the identifier function of \(\iconf_1\) is valid.
		\item \(\pi_1(e)= \star\) and \(\pi_1(e')= \star\). This case is similar as the previous one, except that it uses that the identifier function of \(\iconf_2\) is valid.
		\item \(\pi_1(e)\neq \star\) and \(\pi_2(e')\neq \star\) (with either \(\pi_1(e') = \star\) or \(\pi_1(e')\neq \star\)).
		      If \(\iden(e) = \iden(e')\), then \(\iden_i(\pi_i(e)) = \iden_i(\pi_i(e'))\) for \(i = 1, 2\).
		      Then in this case,
		      \begin{itemize}
			      \item either one of them, say, \(e\), is a synchronisation or a fork: in this case, \(\iden_1(\pi_1(e)) = \iden_2(\pi_2(e)) = \iden_2(\pi_2(e'))\), and \(e'\) was relabeled \(\bot\) at the relabeling stage of the parallel composition, and then removed during the restriction. Hence a contradiction: \(e\) and \(e'\) can't be two events in the same configuration.
			      \item or none of them is a synchronisation, in which case both events were removed by the restriction.
			            Hence, again, a contradiction: \(e\) and \(e'\) can't be two events in the same configuration.
		      \end{itemize}
		      The symmetric case (where \(\pi_2(e)\neq\star\) and \(\pi_1(e')\neq\star\)) is similar.
	\end{itemize}
	\qed
\end{proof}

The following lemma makes more formal the intuition of \autoref{remark1}, page~\pageref{remark1}.
Remember that we assumed that for every configuration structure \((E, C, \labl)\), \(E\) was endowed with a total order, that we write \(\preceq\).

\begin{lemma}
	\label{lem:functors}
	\(\functoric : \catics \to \catcs\), defined by
	\begin{itemize}
		\item \(\functoric (E, C, \labl, \ids, \iden) = (E, C, \labl)\)
		\item %
		      \(\functoric(f_E, f_C, f_{\iden}) = (f_E, f_C)\)
	\end{itemize}
	and \(\functorci : \catcs \to \catics\), defined by
	\begin{itemize}
		\item \(\functorci (E, C, \labl) = (E, C, \labl, \ids, \iden)\), where \(\ids = \{1, \hdots, \card{E}\}\) and
		      \[
			      \iden(e) = \begin{cases*}
				      1 & if \(\forall e', e \preceq e'\) \\
				      i+1 & if \(\exists e', e' \preceq e\), \(\iden(e') = i\) and there is no \(e''\) \st \(e' \preceq e'' \preceq e\)
			      \end{cases*}
		      \]
		\item For \((f_E, f_C): (E_1, C_1, \labl_1) \to (E_2, C_2, \labl_2)\), \(\functorci(f_E, f_C) = (f_E, f_C, f_{\iden})\), where \(f_{\iden}(\iden_1(e)) = \iden_2(f_E(e_2))\).
	\end{itemize}
	are functors.
\end{lemma}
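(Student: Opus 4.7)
The plan is to verify the four standard functoriality axioms for each of \(\functoric\) and \(\functorci\) in turn: that the object assignment lands in the target category, that the morphism assignment produces a valid morphism, and that identities and composition are preserved. For \(\functoric\) the work is essentially bookkeeping, because each structural component of a \(\iconf\)-structure or of a \(\catics\)-morphism already contains a configuration-structure datum by \autoref{def:iconf} and \autoref{def:cat_ics}; the substance of the proof lies on the \(\functorci\) side, where one genuinely has to construct the identifier data.

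For \(\functoric\), I would first note that the underlying configuration structure \((E, C, \labl)\) of any \(\iconf = (E, C, \labl, \ids, \iden)\) is a configuration structure by \autoref{def:iconf}, so \(\functoric\) is well-defined on objects. On morphisms, a \(\catics\)-morphism is by definition a triple whose first two components already form a \(\catcs\)-morphism, so forgetting \(f_{\iden}\) yields a \(\catcs\)-morphism. The identity of \(\catics\) was defined componentwise in the proof of \autoref{lem:cat_ics}, so its image under \(\functoric\) is the identity of \(\catcs\); and composition in \(\catics\) is also componentwise, so \(\functoric(g \circ f) = \functoric(g) \circ \functoric(f)\) is immediate.

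For \(\functorci\), I would first show that the image of an object is a genuine \(\iconf\)-structure. The enumeration \(\iden\) of \(E\) along the fixed total order \(\preceq\) is a bijection \(E \to \{1, \hdots, \card E\}\), hence injective, so collision freeness (\ref{eq:collision}) holds trivially on every configuration. Next, on morphisms, I would verify that the defining equation \(f_{\iden}(\iden_1(e)) = \iden_2(f_E(e))\) determines a total, well-defined function \(\ids_1 \to \ids_2\): totality uses surjectivity of \(\iden_1\) onto \(\ids_1\), and well-definedness uses its injectivity, so that \(\iden_1(e) = \iden_1(e')\) forces \(e = e'\) and the right-hand side is unambiguous. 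The identifier-preservation condition required by \autoref{def:cat_ics} then holds by construction.

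Finally, I would check preservation of identities and composition for \(\functorci\). For identities, \(\functorci(\id)_{\iden}(\iden_1(e)) = \iden_1(\id_E(e)) = \iden_1(e)\), so the image is the identity on \(\ids_1\). For composition, given \(f : \conf_1 \to \conf_2\) and \(g : \conf_2 \to \conf_3\) in \(\catcs\), I would compute \(\functorci(g \circ f)_{\iden}(\iden_1(e)) = \iden_3(g_E(f_E(e))) = \functorci(g)_{\iden}(\iden_2(f_E(e))) = (\functorci(g)_{\iden} \circ \functorci(f)_{\iden})(\iden_1(e))\), which together with componentwise agreement on \(f_E\) and \(f_C\) yields \(\functorci(g \circ f) = \functorci(g) \circ \functorci(f)\). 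The only real obstacle is making sure the morphism-level definition of \(\functorci\) is posed as a map of identifier sets rather than of events; that step relies essentially on the bijectivity of \(\iden_1\), which is why fixing the total order \(\preceq\) on events in the ambient assumption is needed.
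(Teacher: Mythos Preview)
Your proposal is correct and follows essentially the same approach as the paper: the paper dismisses \(\functoric\) as immediate and then, for \(\functorci\), checks that objects and morphisms are well-formed before computing preservation of identity and composition exactly as you do. If anything, you are more careful than the paper in spelling out why the defining equation for \(f_{\iden}\) yields a total, single-valued function \(\ids_1 \to \ids_2\) via the bijectivity of \(\iden_1\); the paper simply calls this step ``immediate''.
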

\begin{proof}
	Proving that \(\functoric\) is a functor is immediate.

	Proving that \(\functorci(\conf)\) is a \(\iconf\)-structure is immediate, since our construction of \(\iden\) trivially insures \ref{eq:collision}.
	For \((f_E, f_C): \conf_1 \to \conf_2\), proving that \(\functorci(f_E, f_C)\) is a morphism between \(\functorci(\conf_1)\) and \(\functorci(\conf_2)\) is also immediate.
	For the preservation of the identity, we compute:
	\begin{align*}
		\functorci(\id_{\conf}) & = \functorci(\id_E, \id_C)   \\
		                        & = (\id_E, \id_C, f_{\iden})  \\
		\shortintertext{where \(f_{\iden} (\iden (e)) = \iden(\id_E(e)) = \iden(e)\), hence \(f_{\iden} = \id_{\ids} : \ids \to \ids\),}
		                        & = (\id_E, \id_C, \id_{\ids}) \\
		                        & = \id_{\functorci(\conf)}
	\end{align*}
	For the composition of morphisms, given \(f = (f_C, f_E) : \conf_1 \to \conf_2\) and \(g = (g_C, g_E) : \conf_2 \to \conf_3\), we write \(\functorci(\conf_i) = (E_i, C_i, \labl_i, \ids_i, \iden_i)\) and we compute:
	\begin{align*}
		\functorci(g) \circ \functorci(f) & = (g_C, g_E, g_{\iden}) \circ (f_C, f_E, f_{\iden})         \\
		                                  & = (g_C \circ f_C, g_E \circ f_E, g_{\iden} \circ f_{\iden}) \\
		\shortintertext{where, for all \(e \in E_1\), we compute:
			\begin{align*}
				(g_{\iden} \circ f_{\iden})(\iden_1(e)) & = g_{\iden} (f_{\iden}(\iden_1(e)) \\
				                                        & = g_{\iden} (\iden_2(f_E(e)))      \\
				                                        & = \iden_3(g_E (f_E(e)))
			\end{align*}
			And hence we can conclude:
		}
		\functorci(g) \circ \functorci(f) & = \functorci(g \circ f)
	\end{align*}
	\qed
\end{proof}

\subsection{Proofs for \autoref{sec:enc_mem}}
\label{subsec:apB}
In the following, we start by observing that \autoref{lem:mem_enc_in_origin} follows from \autoref{lem:enc_mem_2}.
\autoref{lem:enc_mem_2}, on its side, requires a bit of work: on top of proving \autoref{lem:prefix_enc}, we state and prove some intermediate lemmas (Lemma~\ref{lem:id_unique}, \ref{lem:causality_mem} and \ref{lem:te}) needed to obtain it.

\begin{proof}[\autoref{lem:mem_enc_in_origin}, page~\pageref{lem:mem_enc_in_origin}]
	Follows from \autoref{lem:enc_mem_2}. %
	\qed
\end{proof}

\begin{lemma}
	\label{lem:id_unique}
	For all RCCS process \(R\), letting \(\encm{R} = (E, C, \iden)\),
	for all \(e_1, e_2 \in E\), \(\iden(e_1) = \iden(e_2)\) implies \(e_1 = e_2\).
\end{lemma}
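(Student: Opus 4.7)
The plan is to prove the statement by structural induction on $R$, leaning on the standing assumption that $R$ is coherent together with Corollary~\ref{cor:unique_id}.

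In the base case $R = m \rhd P$ I would proceed by a nested induction on the memory stack $m$. The subcases $m = \emptymem$ and $m = \fork.m'$ are immediate, since either $E$ is empty or the encoding coincides with $\encm{m'}$ so the inner induction hypothesis applies directly. The interesting subcase is $m = \mem{i, \alpha, Q}.m'$, where $\encm{m}$ extends $\encm{m'}$ with one new event $e$ carrying $\iden(e) = i$. I would then verify that $i$ does not already appear as the identifier of any event of $\encm{m'}$: this is exactly what Corollary~\ref{cor:unique_id} gives us, since $i$ occurs only once in $m$ and the identifiers used in $\encm{m'}$ are precisely those tagging the non-\(\fork\) memory events of $m'$.

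For the inductive case $R = R_1 \mid R_2$, I would first note that each $R_i$ inherits coherence from $R$, because pairwise coherence of the memories of $R$ restricts to pairwise coherence of the memories of either factor, so the induction hypothesis yields identifier uniqueness inside each $\encm{R_i}$. Unfolding the parallel composition of $\iconf$-structures (Definition~\ref{icat-op-def}), the events of $\encm{R}$ form a subset of $E_1 \times_\star E_2$, and the identifier of such an event is built componentwise from $\iden_1$, $\iden_2$, and the placeholder $\iden_\star \notin \ids_1 \cup \ids_2$. A case analysis on which of $\pi_1(e), \pi_2(e), \pi_1(e'), \pi_2(e')$ equal $\star$ then reduces equality of the two identifier-pairs to equality of the non-\(\star\) projections, and the induction hypothesis inside each $\encm{R_i}$ yields $e = e'$; the final restriction in the parallel composition only deletes events mapped to $\bot$ and so preserves uniqueness. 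The remaining case $R = R' \bs a$ is trivial since $\encm{R' \bs a} = \encm{R'}$.

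The main technical delicacy lies in the parallel-composition case, but the pivotal observation—that $\iden_\star$ lies outside $\ids_1 \cup \ids_2$, ruling out spurious cross-identifications between star and real components—is really the only thing one needs; everything else is bookkeeping, and the coherence hypothesis is invoked only through Corollary~\ref{cor:unique_id} at the very bottom of the induction.
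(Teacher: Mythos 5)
Your proof is correct and follows essentially the same route as the paper's: structural induction on $R$, with the thread case discharged via \autoref{cor:unique_id} and the parallel-composition case read off from the componentwise identifier construction of \autoref{icat-op-def}. The paper's version is much terser (it simply asserts that the parallel case follows from the definition), so your explicit case analysis on the $\star$-components is an elaboration of the same argument rather than a different approach.
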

\begin{proof}
	We proceed by structural induction on \(R\).
	From \autoref{cor:unique_id} the only interesting case is the parallel composition, i.e. \(R= R_1\mid R_2\).
	From the definition of parallel composition in \(\iconf\)-structures (\autoref{icat-op-def}), it follows that \(\iden(e_1) = \iden(e_2)\) implies \(e_1 = e_2\).
	\qed
\end{proof}

\begin{proof}[\autoref{lem:prefix_enc}, page~\pageref{lem:prefix_enc}]
	We proceed by induction on \(R\).

	If \(R\) is \(m \rhd P\), we prove that \(\encm{m}\) is a poset with one maximal element by induction on \(m\).
	The base case, if \(m\) is \(\emptymem\), is trivial, since \(\encm{\emptymem} = \confzero\) is a poset with one maximal element.
	If \(m\) is \(\fork . m'\), then it follows by induction hypothesis, since \(\encm{\fork . m'} = \encm{m'}\).
	If \(m\) is \(\mem{i, \alpha, P} . m'\), then by induction hypothesis, \(\encm{m'}\) is a poset with one maximal element, and the construction of \(\encm{\mem{i, \alpha, P} . m'}\) detailed in \autoref{def:encode_m} preserves that property.

	If \(R\) is \(R' \bs a\), then it trivially follows by induction hypothesis.

	Finally, if \(R\) is \(R_1 \mid R_2\), then by induction hypothesis we get that \(\encm{R_1}\) and \(\encm{R_2}\) are both posets with a maximal configuration.
	We also know by \autoref{lem:id_unique} that they have disjoint identifiers.
	Looking at the definition of parallel composition for \(\iconf\)-structures (\autoref{icat-op-def}), we may observe that \(\encm{R} = \encm{R_1 \mid R_2} = \encm{R_1} \mid \encm{R_2} = \big(\relabl \circ (\encm{R_1}\times\encm{R_2})\big)\restr{F}\).

	We show that there exists more than one maximal configurations in \(\encm{R_1}\times\encm{R_2}\) and that all but one are removed by the restriction.

	We show this by first showing that there exists more than one maximal configurations in \(\functoric(\encm{R_1})\times\functoric(\encm{R_2})\), denoted here with \(\conf\).
	From the definition of product (\autoref{cat-op-def}), we have that there exists \(y_1,\cdots y_n\) maximal configurations in \(\conf\) such that \(\pi_1(y_i)\) and \(\pi_2(y_i)\) are maximal in \(\conf_1\) and \(\conf_2\), respectively.
	As \(\encm{R_1}\times\encm{R_2} = (\functoric(\encm{R_1})\times\functoric(\encm{R_2})) + \iden\) it follows that the maximal configurations of \(\conf\) are preserved in \(\encm{R_1}\times\encm{R_2}\).

	A second step is then to show that the restriction keeps only one maximal configuration.
	Let \(y_i,y_j\) be two maximal configurations. As they are maximal it implies that \(y_i\cup y_j\notin\conf\), for \(i\neq j\leqslant n\). In turn, this implies that there exists \(e_i\in y_i\) and \(e_j\in y_j\) such that \(\pi_1(e_i) = \pi_1(e_j)\) and \(e_i\neq e_j\), as otherwise \(y_i\cup y_j\) would be defined. Here we assume that \(\pi_1(e_i) = \pi_1(e_j)\) but we could also take \(\pi_2(e_i) = \pi_2(e_j)\) and the argument still holds.

	Let us now take \(d\) an event in \(\conf_1\) and take \(e_1, \hdots, e_m\) the subset of events in \(E\) where \(\pi_1(e_i) = d\). The restriction in the parallel composition of \(\encm{R_1} \mid \encm{R_2}\) keeps only one such event \(e_i\) and removes the rest. Therefore, from all maximal configurations \(y_1, \hdots, y_m\) such that \(e_i \in y_i\), \(i \leq m\), only one remains.

	By applying the argument above to all events in \(\encm{R_1}\) (and \(\encm{R_2}\)), we have that the restriction removes all but one \(y_i\), which is then the maximal configuration in \(\encm{R_1} \mid \encm{R_2}\).

	\qed
\end{proof}

Let us write \(\xmax\) the maximal configuration from \autoref{lem:prefix_enc}.

For the following proof, we need to introduce the causality relation on memory events and on transitions from \cite{Danos2005}. We write \(i_1<_R i_2\) for \(R\) a proces and \(i_1,i_2\in \ids\) if there exist a memory stack \(m.\mem{i_1,\alpha_1,P_1}.m'.\mem{i_2,\alpha_2,P_2}.m''\) in \(R\). We write \(i_1\leq_R i_2\) for the transitive closure of \(<\) over all memories of \(R\).

\begin{lemma}
	\label{lem:causality_mem}
	Let \(R\) be a reversible process and \(\encm{R} = (E,C,\labl,\iden)\) be the encoding of its memory, with \(\xmax\) the maximal configuration in C. Let \(\mem{i_1,\alpha_1,P_1}\) and \(\mem{i_2,\alpha_2,P_2}\), \(i_1\neq i_2\) be two memory events in \(R\) and let \(e_1,e_2\in E\) be the two corresponding events, i.e. \(\iden(e_j) = i_j\), \(j\in\{1,2\}\). Then \(i_1\leq_R i_2 \iff e_1 \leq_{\xmax} e_2\).
\end{lemma}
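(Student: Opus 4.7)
The plan is to prove the lemma by structural induction on $R$. The restriction case $R = R' \bs a$ is immediate, since by \autoref{def:encode_m} we have $\encm{R' \bs a} = \encm{R'}$, so both the memory events and their encoding are unchanged and the induction hypothesis transfers verbatim.

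For a single thread $R = m \rhd P$, I would proceed by a secondary induction on $m$. The cases $m = \emptymem$ (vacuously no memory events) and $m = \fork . m'$ (where $\encm{\fork . m'} = \encm{m'}$ and the memory events coincide) reduce directly to the inductive hypothesis. For $m = \mem{i, \alpha, P} . m'$, \autoref{cor:unique_id} ensures $i \notin \ids(m')$, and \autoref{def:encode_m} inserts the new event $e$ with $\iden(e) = i$ only on top of the unique maximal configuration of $\encm{m'}$ provided by \autoref{lem:prefix_enc}. If both $e_1, e_2$ already lie in $\encm{m'}$ the IH closes the case; otherwise one of them (say $e_2$) equals $e$, in which case $e_1 \leq_{\xmax} e$ follows directly from the construction, while $i_1 <_R i$ holds because $\mem{i_1, \alpha_1, P_1}$ sits below $\mem{i, \alpha, P}$ in the same memory stack.

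The substantive case is the parallel composition $R = R_1 \mid R_2$, where $\encm{R} = \encm{R_1} \mid \encm{R_2}$. The key structural observation, combining coherence (\autoref{lem:reach-coh} and \autoref{cor:unique_id}) with the relabel-then-restrict construction of \autoref{icat-op-def}, is that each event of $\encm{R_j}$ lifts to a unique surviving event in the product: alone as $(\cdot, \star)$ or $(\star, \cdot)$ when its identifier is not shared between the two components, or paired as $(e^{(1)}, e^{(2)})$ when it records a synchronization (the only way shared identifiers arise between coherent $R_1$ and $R_2$). Hence $e \mapsto \pi_j(e)$ restricted to events with $\pi_j(e) \neq \star$ is a bijection onto the events of $\encm{R_j}$. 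For the forward direction $i_1 \leq_R i_2 \Rightarrow e_1 \leq_{\xmax} e_2$, one decomposes the chain into direct $<_R$ steps, each of which occurs within a single memory stack of $R_1$ or $R_2$; the IH on the appropriate component lifts each step to a causality in $\encm{R_j}$, which transports to the product because any configuration $y$ containing the later event projects to a configuration in $\encm{R_j}$ containing the later projection, which by IH contains the earlier projection, forcing the earlier event back into $y$ via the bijection.

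The main obstacle will be the reverse direction. When the non-$\star$ projections of $e_1$ and $e_2$ land in the same component, the IH on that component suffices. The delicate case is when they project into disjoint components, so that $e_1 \leq_{\xmax} e_2$ must be mediated by a zigzag of synchronization events alternately linking causal chains in $\encm{R_1}$ and $\encm{R_2}$. I would extract such a mediating chain by contradiction: if no such zigzag exists, one can use Stability and Coincidence-Freeness from \autoref{def:conf_str}, together with the projection bijections, to surgically construct a configuration $y \subseteq \xmax$ containing $e_2$ but missing $e_1$, contradicting $e_1 \leq_{\xmax} e_2$. The extracted zigzag then translates into a $\leq_R$-chain by alternating the IH between $R_1$ and $R_2$ at each transition through a shared identifier.
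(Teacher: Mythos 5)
Your proof follows exactly the route the paper takes: the paper's entire recorded proof of this lemma is the single sentence \enquote{Follows by a structural induction on \(R\)}, and your structural induction on \(R\) (with the secondary induction on \(m\) for single threads and the projection/bijection analysis for parallel composition) is a faithful --- and far more detailed --- elaboration of that. The delicate reverse direction of the parallel case that you flag, where \(e_1 \leq_{\xmax} e_2\) is mediated by synchronization events zigzagging between the two components, is precisely the content the paper leaves implicit, so nothing in the paper's argument contradicts or goes beyond your treatment.
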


\begin{proof}
	Follows by a structural induction on \(R\).
\end{proof}

We also import from \cite[Definition 1]{Danos2005} the definition of causality on transitions \(t_1\) and \(t_2\), of a trace \(\theta\), that we denote here with \(t_1\leq_{\theta} t_2\). We do not give the definition formally, as it is lengthy, but intuitively, it is the causality relation on memory events lifted to transitions.

Let \(R\) be an RCCS process and \(\theta:\orig{R}\tfbwlts R\) a trace. Also, let \(\enc{\orig{R}} = (E,C,\labl)\) and let \(x\) be a configuration such that \(\enc{R} = (\enc{\orig{R}}, x)\). We define a bijection from transitions in \(\theta\) to events in \(x\), that we write \(\te{}\). We define the bijection by induction on the trace \(\theta\) using the fact for each transition \(t:R'\redl{i,\alpha} R''\) in \(\theta\), there exists \(x'\) a configuration and \(e'\) an event in \(\enc{\orig{R}}\) such that \(\enc{R'} = (\enc{\orig{R}},x')\) and \(\enc{R''} = (\enc{\orig{R}},x'\cup\{e'\})\). Then \(\te{}(t) = e'\).

\begin{lemma}
	\label{lem:te}
	Let \(t_1\) and \(t_2\) be two transitions of a trace \(\theta:\orig{R}\tfbwlts R\) and let \(\enc{R} = (\enc{\orig{R}}, x)\). Then \(t_1\leq_{\theta} t_2\iff \te{}(t_1)\leq_{x} \te{}(t_2)\).
\end{lemma}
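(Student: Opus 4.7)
The plan is to proceed by induction on the length of the trace \(\theta\), reducing at each step to the causality between the memory events associated to the transitions, which Lemma~\ref{lem:causality_mem} already handles. The base case of an empty trace is vacuous.

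For the inductive step, write \(\theta = \theta' \cdot t\) with intermediate state \(R''\), so \(\enc{R''} = (\enc{\orig{R}}, x'')\). If \(t\) is a forward transition \(R'' \fwlts{i}{\alpha} R\), then by the construction of \(\te{}\) it sends \(t\) to the unique event \(e \in x\) such that \(x = x'' \cup \{e\}\), and \(e\) carries identifier \(i\). For an earlier \(t_j \in \theta'\), the definition of \(\leq_\theta\) from Danos and Krivine lifts to transitions the causality \(\leq_R\) on the memory events they create; since \(\te{}(t)\) has identifier \(i\) and \(\te{}(t_j)\) has the identifier \(i_j\) of the memory event created by \(t_j\), Lemma~\ref{lem:causality_mem} gives \(t_j \leq_\theta t\) iff \(\te{}(t_j) \leq_{\xmax} \te{}(t)\), and Lemma~\ref{lem:enc_mem_2} identifies \(\xmax\) with \(x\) via the isomorphism \(\functoric(\encm{R}) \iso x\downarrow\). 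Pairs of transitions both in \(\theta'\) are dispatched by the induction hypothesis, provided \(\leq_x\) restricted to \(x''\) coincides with \(\leq_{x''}\); this follows because causality in a configuration is inherited from the ambient configuration structure.

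The backward case is symmetric: a transition \(R'' \bwlts{i}{\alpha} R\) removes the event with identifier \(i\) already registered by \(\te{}\), and the causality on \(\theta\) treats it as the mirror image of the forward step that originally introduced it; the same chain of equivalences carries through.

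The main obstacle I expect is the bookkeeping around backward moves — verifying that \(\te{}\) remains a well-defined bijection on mixed forward/backward traces (which likely requires cancelling forward/backward pairs sharing an identifier so that only \enquote{net} creations correspond to events of \(x\)), and confirming that causality on a larger configuration restricts coherently to sub-configurations so the induction hypothesis applies. Once those subtleties are settled, the statement essentially transports Lemma~\ref{lem:causality_mem} through the bijection \(\te{}\) and the isomorphism of Lemma~\ref{lem:enc_mem_2}.
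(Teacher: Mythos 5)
Your overall strategy---induction on the length of the trace---is the same one the paper takes; its entire proof of this lemma is the single sentence \enquote{Follows by induction on the trace \(\theta\)}, so your sketch is considerably more explicit than the original. However, the route you take through the inductive step has a genuine structural problem: you discharge the key equivalence by invoking \autoref{lem:enc_mem_2} to \enquote{identify \(\xmax\) with \(x\)} via \(\functoric(\encm{R}) \iso x\downarrow\), but the paper's proof of \autoref{lem:enc_mem_2} itself appeals to \autoref{lem:te} (precisely to establish the order-preservation clause of the bijection \(\me{}\)). Using \autoref{lem:enc_mem_2} inside a proof of \autoref{lem:te} therefore makes the development circular. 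Relatedly, \autoref{lem:causality_mem} does not say what you attribute to it: it relates \(i_1 \leq_R i_2\) to causality between events \emph{of \(\encm{R}\)} in \(\xmax\), whereas \(\te{}(t_1)\) and \(\te{}(t_2)\) are events of \(\enc{\orig{R}}\) living in \(x\). The bridge between these two structures is exactly the bijection \(\me{} = \te{} \circ \mt{}\) whose order-preservation is what \autoref{lem:enc_mem_2} is in the business of proving, so it cannot be assumed here.

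The repair is to argue directly, without mentioning \(\encm{\cdot}\) at all: unfold the Danos--Krivine definition of \(t_1 \leq_\theta t_2\) (causality on the memory stacks created by the transitions) and, in the inductive step for the last transition \(t\), use the operational correspondence between \(R_j \fwlts{i}{\alpha} R_{j+1}\) and \((\enc{\orig{R}}, x_j) \redl{e} (\enc{\orig{R}}, x_j \cup \{e\})\) to show that \(t_j \leq_\theta t\) holds exactly when every configuration of \(\enc{\orig{R}}\) below \(x\) containing \(\te{}(t)\) also contains \(\te{}(t_j)\), i.e.\ \(\te{}(t_j) \leq_x \te{}(t)\); pairs inside the prefix are handled by the induction hypothesis together with the (correct) observation that \(\leq_{x''}\) is the restriction of \(\leq_x\). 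Alternatively, one can prove \autoref{lem:te} and \autoref{lem:enc_mem_2} by a single simultaneous induction on the trace, which is arguably what the paper implicitly does. Your flagging of the bookkeeping for backward moves is well taken: the paper defines \(\te{}\) only on forward transitions, and making it a genuine bijection on mixed traces does require cancelling matched forward/backward pairs, a point the paper leaves entirely implicit.
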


\begin{proof}
	Follows by induction on the trace \(\theta\).
\end{proof}

\begin{proof}[\autoref{lem:enc_mem_2}, page~\pageref{lem:enc_mem_2}]
	We reformulate the hypothesis and show a stronger (in the sense that it is more specific) result from which~\autoref{lem:enc_mem_2} follows.

	As \(R\) is reachable there exists a forward-only trace \(\orig{R}\redl{i_1,\alpha_1}\cdots\redl{i_n,\alpha_n} R\)~\cite[Lemma 1]{Aubert2016jlamp}, denoted by \(\theta\).
	We write \(\encm{R_j}= (E_j,C_j,\labl_j,\iden_j)\) for \(R_j\) a process in the trace above and \(\theta_j :\orig{R}\redl{i_1,\alpha_1}\cdots\redl{i_j,\alpha_j} R_j\) a subtrace of \(\theta\).

	Let \(\enc{\orig{R}} = (E,C,\labl)\). From \cite{Aubert2016jlamp} we have that \(\enc{R_j} = (\enc{\orig{R}}, x_j)\) for \(x_j\in C\).
	We show that there exists three bijections:
	\begin{itemize}
		\item \(\te{j}\) between transitions in the trace \(\theta_j\) and events in \(\enc{\orig{R}}\), i.e. \(\te{j}: \theta_j\to E\);
		\item \(\mt{j}\) between events in \(\encm{R_j}\) and transitions, i.e. \(\mt{j}:E_j\to\theta_j\);
		\item \(\me{j}\) between events in \(\encm{R_j}\) and events in \(\enc{\orig{R}}\), i.e. \(\me{j}:E_j\to E\), such that \(\me{j} = \te{j}\circ\mt{j}\).
	\end{itemize}
	Moreover, all three bijections preserve the labels and the causality relations. In particular, \(\me{}\) is label and order preserving for all events in \(\xmax\), the maximal configuration in \(\encm{R_j}\):
	\begin{align}
		\me{j}(\xmax) = x_j \\
		\label{eq:causality_max}
		e_1\leq_{\xmax} e_2 \iff \me{j}(e_1) \leq_{x_j} \me{j}(e_2).
	\end{align}
	From~\autoref{def:xdownarrow}, \(x\downarrow = (x,C_x,\labl\restr x)\), for \(C_x = \{ y \setst y\in C, y\subseteq x\}\). We can write \(\encm{R_j} = \xmax \downarrow\), and therefore have \(\functoric(\encm{R_j}) \iso (\me{j}(\xmax)) \downarrow\).

	We proceed by induction on the trace \(\theta\), and at each step we extend the three bijections above.

	For a transition \(R_j\redl{i,\alpha} R_{j+1}\) and a trace \(\theta_j\) we have by induction that
	\begin{itemize}
		\item \(\functoric(\encm{R_j}) \iso x_j \downarrow\), for \(\enc{R_j} = (\enc{\orig{R}}, x_j)\);
		\item there exists \(\te{j}:\theta_j\to E\), \(\mt{j}:E_j\to\theta_j\) and \(\me{j}:E_j\to E\)
	\end{itemize}
	as defined above.

	There is an operational correspondence between \(R_j\) and its encoding~\cite[Lemma 6]{Aubert2016jlamp}, and therefore there exists \(e\) an event in \(\enc{\orig{R}}\)
	such that
	\begin{align}
		\label{eq:op_corr}
		(\enc{\orig{R}},x_j)\redl{e}(\enc{\orig{R}},x_j\cup\{e\})
	\end{align}
	with \(\labl(e)=\alpha\) and where we write \(x_{j+1} = x_j\cup\{e\}\).

	We have to show that \(\functoric(\encm{R_{j+1}}) \iso x_{j+1} \downarrow\). More specifically we show that \(E_{j+1}\setminus E_j = \{e_{j+1}\}\) and that we can extend the bijections such that
	\begin{align}
		\label{eq:bij_te}
		\te{j+1} = \te{j}\cup\{(R_j\redl{i,\alpha} R_{j+1}) \mapsto e\}       \\
		\label{eq:bij_mt}
		\mt{j+1} = \mt{j}\cup\{e_{j+1} \mapsto (R_j\redl{i,\alpha} R_{j+1})\} \\
		\label{eq:bij_me}
		\me{j+1} = \me{j}\cup\{e_{j+1} \mapsto e\}
	\end{align}
	As \(\me{j+1}\) is a label and order preserving bijection on \(\xmax\), the maximal configuration in \(\encm{R_{j+1}}\), it follows that \(\me{j+1}(\xmax) = x_{j+1}\). As \(\encm{R_{j+1}} \iso \xmax\downarrow\) it follows \(\functoric(\encm{R_{j+1}}) \iso x_{j+1} \downarrow\).

	We now proceed by cases on the transition \(R_j\fwlts{i}{\alpha} R_{j+1}\).
	We distinguish two cases: \(\alpha\) is an unsynchronized input or output, and \(\alpha = \tau\).
	\begin{itemize}
		\item Let us suppose w.l.o.g. that \(\alpha = a\). Then we can rewrite the transition as follow:
		      \begin{align*}
			      R_j = (\namelist{b})(S\mid m\rhd a.P + Q) \redl{i,\alpha} R_{j+1} = (\namelist{b})(S\mid \mem{i, \alpha, Q}.m\rhd P)
		      \end{align*}

		      Let us also define the following projection on events: \(\pi_S (e) = e_s \) if there exists \(e_s\in \encm{S} \) such that \(e,e_s\) have the same identifiers and undefined otherwise. Similarly define \(\pi_m\) for the projections of events from \(\encm{R_i}\) to \(\encm{m}\).

		      Note that we can extend \(\te{j+1}\) as in~\autoref{eq:bij_te}.
		      We show that there exists an event in \(\encm{R_{j+1}}\) which corresponds to the transition \(R_j\redl{i,\alpha} R_{j+1}\).
		      Let us unfold the encoding of the two processes above:
		      \begin{align*}
			      \encm{R_j} = \encm{(\namelist{b})(S\mid m\rhd a.P + Q)} = \big( r\circ (\encm{S} \times \encm{m})\big)\restr{\bot} \restr{\namelist{b}} \\
			      \encm{R_{j+1}} =\encm{(\namelist{b})(S\mid \mem{i, \alpha, Q}.m\rhd a)} = \big( r\circ (\encm{S} \times \encm{\mem{i, \alpha, Q}.m}) \big)\restr{\bot} \restr{\namelist{b}}.
		      \end{align*}
		      Let us write \(\encm{m} = (E_m,C_m, \labl_m, \iden_m)\). From~\autoref{lem:prefix_enc} we have that there exists a single maximal configuration in \(C_m\), denoted with \(\xmax^m\). Using~\autoref{def:encode_m} we can unfold \(\encm{\mem{i, \alpha, Q}.m}\) and write
		      \begin{align*}
			      \encm{R_j}     & = \big(r\circ (\encm{S} \times (E_m,C_m, \labl_m,\iden_m))\big)\restr{\bot} \restr{\namelist{b}}             \\
			      \encm{R_{j+1}} & = \big( r\circ (\encm{S} \times                                                                              \\
			                     & \qquad\qquad(E_m\cup\{e_m\},C_m\cup (\xmax^m\cup\{e_m\}),                                                    \\
			                     & \qquad\qquad\labl_m \cup \{e_m \to \alpha\}, \iden_m + \{e_m\to i\})) \big)\restr{\bot} \restr{\namelist{b}}
		      \end{align*}
		      for some event \(e_m\notin E_m\).
		      From rules act.\ and par.\ of~\autoref{ltsrules}, \(i\) is not in the domain of \(\iden_m\).
		      Therefore all synchronisations in \(\encm{S} \times \encm{\mem{i, \alpha, Q}.m}\) of the form \((e_s, e_m)\), with \(e_s\in\encm{S}\), are relabeled \(\bot\) and removed by the first restriction. The event \((\star,e_m)\) is preserved by the first restriction. It is not removed by the second restriction as \(a\notin\namelist{b}\).
		      Remember that from~\autoref{lem:prefix_enc} we have that there exists a single maximal configuration in \(\encm{R_{j+1}}\) and from the definition of the parallel composition there is only one event in \(xmax\) with the first projection equal to \(e_m\), denoted \(e_{j+1}\): \(e_{j+1}\in\xmax\) with \(\pi_m(e_{j+1})=e_m\).
		      It follows that \(E_{j+1} = E_j \cup\{e_{j+1}\}\). We extend then the bijections as in Equations~\ref{eq:bij_mt} and \ref{eq:bij_me}.
		      Moreover, \(\labl(e_{j+1})=\labl_m(e_m)\), and from~\autoref{eq:op_corr} it follows that \(\me{j+1}\) is label preserving.

		      The last part is to show~\autoref{eq:causality_max}.
		      We only have to show that
		      \begin{align}
			      \label{eq:causality_rj}
			      e'\leq_{\xmax} e_{j+1} \iff\me{j+1}(e')\leq_{x_{j+1}}\me{j+1}(e_{j+1}),
		      \end{align}
		      as the rest follows by induction on \(\me{j}\) and from~\autoref{eq:bij_me}. To show \autoref{eq:causality_rj}, consider \(e'\leq_{\xmax} e_{j+1}\). From~\autoref{def:encode_m} there exists a memory event \(d = \mem{i',\alpha',P'}\) in \(\encm{R_{j+1}}\) such that \(\iden_{j}(e') = i'\). From~\autoref{lem:causality_mem} we have then that \(i'\leq_{R} i\) and using the definition of causality on transitions (\cite[Definition 1]{Danos2005}), \(\mt{j+1}(e') \leq_{\theta} \mt{j+1}(e_{j+1})\).

		      We conclude using~\autoref{lem:te} which shows that \(\te{j+1}(\mt{j+1}(e')) \leq_{x_{j+1}} \te{j+1}(\mt{j+1}(e_{j+1}))\).

		      Similarly, we reason for \(e'\) concurrent with \(e_{j+1}\). Lastly, note that there are no events in conflict with \(e_{j+1}\) (or with \(e\) in \(x_{j+1}\downarrow\)), as there is a single maximal configuration in both \(\encm{R_{j+1}}\) and in \(x_{j+1}\downarrow\).

		\item Suppose that \(\alpha = \tau\) and let us write the transitions as follows:
		      \begin{align*}
			      R_j & = (\namelist{b})(S\mid m_1\rhd a.P_1 + Q_1\mid m_2\rhd \out{a}.P_2 + Q_2)                                       \\
			          & \redl{i,\tau} R_{j+1} = (\namelist{b})(S\mid \mem{i, a, Q_1}.m_1\rhd P_1\mid \mem{i, \out{a}, Q_2}.m_2\rhd P_2)
		      \end{align*}
		      We are assuming here, for simplification, that both thread involved in the synchronisation are under the same set of restricted names. The more general case, does not change the reasoning here, just adds in technicality.

		      We show that the transition adds a single event \(e_{j+1}\) in \(\encm{R_{j+1}}\) and that the bijection \textsc{me} defined on \(\encm{R_j}\) extends to \(e_{j+1}\) such that it remains a label and order preserving bijection between the maximal configuration in \(\encm{R_{j+1}}\) and \(x_{j+1}\). The proof follows the reasoning above.
	\end{itemize}
	\qed
\end{proof}

\subsection{Proof of \autoref{th:hpb_rccs}}%
\label{subsec:apC}
Before proving the main theorem, let us make the following observation. Let \(f:(E_1,C_1,\labl_1)\to(E_2,C_2,\labl_2)\) be a morphism, which is defined as function on events: \(f=(f_E,f_C)\). As \(f_C\) is defined by \(f_E\), we can w.l.o.g. write \(f=f_E\).
Saying that \(f\) is an isomorphism is equivalent then to saying that \(f: E_1 \rightharpoonup E_2\) is a label- and order-preseving bijection.
Therefore the functions \(f\) of \autoref{def:hpb} and \autoref{def:whpb_rccs} are of \enquote{the same nature}.

\begin{proof}[\autoref{th:hpb_rccs}, page~\pageref{th:hpb_rccs}]
	Let us prove the \HHPB case, the other three cases being similar, and actually simpler.

	\begin{description}
		\item[\(\Rightarrow\)]
		      Let \(\rel_{\text{RCCS}}\) be a \HHPB between \(\emptyset\rhd P_1\) and \(\emptyset\rhd P_2\).
		      We show that the following relation
		      \[
			      {\rel} =
			      \begin{multlined}[t]
				      \{(x_1,x_2,f) \setst x_1\in\enc{P_1}, x_2\in\enc{P_2}, \exists R_1, R_2\text{ \st }\orig{R_1}=P_1,\\
				      \orig{R_2} = P_2, (R_1,R_2,f)\in \rel_{\text{RCCS}} \text{ and }\\
				      \enc{R_1}=(\enc{P_1},x_1), \enc{R_2}=(\enc{P_2},x_2)\}
			      \end{multlined}
		      \]

		      is a \HHPB between \(\enc{P_1}\) and \(\enc{P_2}\).

		      First note that \((\emptyset,\emptyset,\emptyset)\in\rel\): indeed \((\emptymem\rhd P_1,\emptymem\rhd P_2,\emptyset)\in \rel_{\text{RCCS}}\) and \(\enc{\emptymem\rhd P_i} = (\enc{P_i}, \emptyset)\), for \(i\in\{1,2\}\).

		      Let us suppose that \((x_1,x_2,f)\in\rel\) for \(\enc{R_i}=(\enc{P_i},x_i)\), for \(i\in\{1,2\}\) and \(f:x_1\to x_2\) an isomorphism.
		      Moreover, note that \(\functoric(\encm{R_i}) \iso x_i\downarrow\), from \autoref{lem:enc_mem_2}, and that \(\encm{R_i} \iso x_i\downarrow + \iden\), for some function \(\iden\), from \autoref{def:iconf}.

		      To show that \(\rel\) is a \HHPB we have to show that if \(x_1 \redl{e_1} y_1\) (or \(x_1\revredl{e_1} y_1\)) then there exists \(y_2\) such that \(x_2 \redl{e_1} y_2\) (or \(x_2\revredl{e_2} y_2\) respectively) and such that \((y_1, y_2, f')\in\rel\) for some \(f'\).

		      Let \(x_1\redl{e_1}y_1\), hence by definition, \(y_1 = x_1\cup\{e_1\}\). From the correspondence between RCCS and their encodings (from~\cite[Lemma 6]{Aubert2016jlamp}), it follows that \(R_1\fwlts{i}{\alpha} S_1\) such that \(\enc{S_1} = (\enc{P_1},y_1)\). We therefore deduce that \(\encm{S_1} \iso y_1\downarrow + (\iden\cup \{e_1\mapsto i\})\).

		      As \((R_1,R_2,f)\in \rel_{\text{RCCS}}\) and as \(R_1\fwlts{i}{\alpha} S_1\), it follows that there exists a transition \(R_2\fwlts{j}{\alpha} S_2\) with \(f = f'\restr{\functoric(\encm{R_1})}\) and \((S_1,S_2,f')\in\rel_{\text{RCCS}}\).

		      Again from the correspondence between \(R_2\) and \(\enc{R_2}\) we have that \(x_2\redl{e_2}y_2\) such that \(y_2=x_2\cup\{e_2\}\) and \(\enc{S_2} = (\enc{P_2}, y_2)\). Then we have that \((y_1,y_2,f')\in \rel\).

		      We treat similarly the cases where \(x_2\) does a transition, or when the transitions are backwards.

		\item[\(\Leftarrow\)]
		      Let \(\rel_{\text{CONF}}\) be a \HHPB between \(\enc{P_1}\) and \(\enc{P_2}\).
		      We show that the following relation
		      \begin{align*}
			      {\rel} =
			      \begin{multlined}[t]
				      \{(R_1,R_2,f) \setst \orig{R_1}=P_1, \orig{R_2} = P_2\text{ and }\enc{R_1}=(\enc{P_1},x_1),\\
				      \enc{R_2}=(\enc{P_2},x_2),\text{ with }(x_1,x_2,f)\in\rel_{\text{CONF}} \}
			      \end{multlined}
		      \end{align*}
		      is a \HHPB between \(\emptyset\rhd P_1\) and \(\emptyset\rhd P_2\).

		      We have that \((\emptymem\rhd P_1,\emptymem\rhd P_2,\emptyset)\in \rel\) as \((\emptyset,\emptyset,\emptyset)\in\rel_{\text{CONF}}\) and \(\enc{\emptymem\rhd P_i} = (\enc{P_i}, \emptyset)\), for \(i\in \{1,2\}\).

		      We suppose now that \((R_1,R_2,f)\in\rel\), with \(f:\functoric(\encm{R_1}\to\functoric(\encm{R_2}))\). It implies that \(\enc{R_i}=(\enc{P_i},x_i)\), \(i\in \{1,2\}\), we have that \((x_1,x_2,f)\in\rel_{\text{CONF}}\). As \(\functoric(\encm{R_i}) \iso x_i\downarrow\), from \autoref{lem:enc_mem_2},\(f\) is also defined from \(x_1\) to \(x_2\).

		      To show that \(\rel\) is a \HHPB we have to show that if \(R_1\fwlts{i}{\alpha} S_1\) (or \(R_1\bwlts{i}{\alpha} S_1\)) then there exists \(S_2\) such that \(R_2\fwlts{j}{\alpha} S_2\) (or \(R_2\bwlts{j}{\alpha} S_2\) respectively) and such that \((S_1,S_2,f')\in\rel\) for some \(f'\).

		      Let \(R_1\fwlts{i}{\alpha} S_1\). We use again the correspondence between RCCS and their encodings (from~\cite[Lemma 6]{Aubert2016jlamp}) from which we have that there exists \(e_1\) and \(y_1=x_1\cup\{e_1\}\) such that \(x_1\redl{e_1}y_1\) and \(\enc{S_1} = (\enc{P_1},y_1)\). As \((x_1,x_2,f)\in\rel_{\text{CONF}}\) it implies that there exists \(e_2\), \(y_2\) and \(f'\) such that \(x_2\redl{e_2}y_2\) and \((y_1,y_2,f')\in \rel_{\text{CONF}}\). Again, from the correspondence between RCCS and configuration structures we have that from \(x_2\redl{e_2}y_2\), there exists \(S_2\) and some \(j\) such that \(R_2\fwlts{j}{\alpha} S_2\) with \(\enc{S_2} = (\enc{P_2}, y_2)\). Hence \(\functoric(\encm{S_2}) = y_2\downarrow\). We conclude therefore that \((S_1,S_2,f')\in \rel\).

		      Similarly we show the cases where \(R_1\) does a backward transition, or if \(R_2\) does a forward or backward transition.

	\end{description}
	\qed
\end{proof}

\end{document}